\newtheorem{theorem}{Theorem}
\newtheorem{lemma}[theorem]{Lemma}
\newtheorem{definition}{Definition}
\newenvironment{proof}{{\qquad\noindent\it Proof:}\;}{\hfill $\square$\par}
\begin{document}

\title{Density Evolution Analysis of Generalized Low-density Parity-check Codes under a Posteriori Probability Decoder}

\author{Dongxu Chang, Zhiming Ma, Qingqing Peng, Guanghui Wang, Guiying Yan, Dawei Yin
\thanks{This work is partially supported by the National Key R\&D Program of China, (2023YFA1009600).

D. Chang, Q. Peng, G. Wang, and D. Yin are with the School of Mathematics, Shandong University, Shandong, China (e-mail: dongxuchang@mail.sdu.edu.cn; pqing@mail.sdu.edu.cn; ghwang@sdu.edu.cn; daweiyin2@outlook.com).

Z. Ma and G. Yan are with the Academy of Mathematics and Systems Science, CAS, University of Chinese Academy of Sciences, Beijing, 100190 China (e-mail: mazm@amt.ac.cn; yangy@amss.ac.cn).}
}



\maketitle

\begin{abstract}
In this study, the performance of generalized low-density parity-check (GLDPC) codes under the a posteriori probability (APP) decoder is analyzed. We explore the concentration, symmetry, and monotonicity properties of GLDPC codes under the APP decoder, extending the applicability of density evolution to GLDPC codes. On the binary memoryless symmetric channels, using the BEC and BI-AWGN channels as two examples, we demonstrate that with an appropriate proportion of generalized constraint (GC) nodes, GLDPC codes can reduce the original gap to capacity compared to their original LDPC counterparts. Additionally, on the BI-AWGN channel, we apply and improve the Gaussian approximation algorithm in the density evolution of GLDPC codes. By adopting Gaussian mixture distributions to approximate the message distributions from variable nodes and Gaussian distributions for those from constraint nodes, the precision of the channel parameter threshold can be significantly enhanced while maintaining a low computational complexity similar to that of Gaussian approximations. Furthermore, we identify a class of subcodes that can greatly simplify the performance analysis and practical decoding of GLDPC codes, which we refer to as message-invariant subcodes. Using the aforementioned techniques, our simulation experiments provide empirical evidence that GLDPC codes, when decoded with the APP decoder and equipped with the right fraction of GC nodes, can achieve substantial performance improvements compared to low-density parity-check (LDPC) codes.
\end{abstract}

\begin{IEEEkeywords}
Generalized low-density parity-check codes, density evolution, a posteriori probability decoding.
\end{IEEEkeywords}

\section{Introduction}
\IEEEPARstart{A}{s} a variant of the low-density parity-check (LDPC) codes introduced by Gallager \cite{gallager1962low}, generalized low-density parity-check (GLDPC) codes were first proposed by Tanner \cite{tanner1981recursive}. While retaining the sparse graph representation, GLDPC codes replace single parity-check (SPC) nodes with error-correcting block subcodes, known as generalized constraint (GC) nodes. The degrees of GC nodes match the length of their associated subcodes. Various linear block codes, including Hamming codes \cite{lentmaier1999generalized} \cite{hirst2002application}, Bose-Chaudhuri-Hocquengham (BCH) codes \cite{boutros1999generalized}, Hadamard codes \cite{yue2007generalized}, and others, can serve as subcodes in constructing GC nodes. GLDPC codes offer advantages such as the ability to employ more powerful decoders at GC nodes during decoding, resulting in improved performance \cite{liu2019probabilistic}, faster convergence \cite{mulholland2015design}, and reduced error floor \cite{liva2008quasi} \cite{mitchell2013minimum}.

In \cite{liu2019probabilistic}, the tradeoff between code rate and asymptotic performance of a class of GLDPC code ensembles with fixed degree distributions constructed by including a certain fraction of GC nodes in the graph is analyzed over the binary erasure channel (BEC). 
Through the analysis of the probabilistic peeling decoder (P-PD) algorithm, Liu \cite{liu2019probabilistic} accurately predicts the performance of GLDPC codes under maximum likelihood (ML) decoding on GC nodes. The analysis in \cite{liu2019probabilistic} demonstrates that when the proportion of GC nodes is appropriate, GLDPC codes can reduce the gap to capacity compared to their original LDPC counterparts. However, for a wider range of channels, such as binary input additive white Gaussian noise (BI-AWGN) channels, there is still a lack of sufficient analysis on the performance of GLDPC codes.

In LDPC codes, density evolution \cite{richardson2001capacity} serves as a powerful tool for analyzing the decoding performance. By showing the properties of centrality, symmetry, and monotonicity for LDPC codes under belief propagation (BP) decoding, the performance analysis of LDPC codes with block length tends to infinity can be greatly simplified. It was demonstrated that the decoding performance of LDPC codes with block length tends to infinity can be analyzed using the all-zero codeword on cycle-free graphs. Additionally, a threshold of the channel parameter can be determined, which is the minimum channel quality that supports reliable iterative decoding of asymptotically large codes drawn from the given code ensemble. 

In this paper, we extend the performance analysis of GLDPC codes to binary memoryless symmetric (BMS) channels. In pursuit of the optimal design of GLPDC codes with high-performance characteristics, this paper introduces a systematic methodology. Specifically, for GLDPC codes under the a posteriori probability (APP) decoder, we adopt a general approach analogous to density evolution in LDPC codes, and the concentration, symmetry, and monotonicity properties are demonstrated on GLDPC codes.

Based on the aforementioned properties, density evolution analyses are provided for GLDPC codes on BMS channels, and we show the analysis over the BEC and BI-AWGN channels as two examples. However, it's worth noting that due to the computational complexity associated with APP decoding, both practical decoding and theoretical analysis face considerable challenges. In this paper, we identify a class of error-correcting block codes in which the propagation of messages from each connected variable node follows a uniform and consistent pattern when they are used as subcodes on GC nodes. Consequently, the performance analysis and practical decoding of GLDPC codes can be significantly simplified. We refer to such subcodes as message-invariant subcodes. Remarkably, it can be established that several codes, including Hamming codes, Reed-Muller codes, extended BCH codes, and others, can all be classified as message-invariant subcodes.


Considering the complexity of computing the message distribution of GC nodes on the BI-AWGN channel, similar to the Gaussian approximation used for LDPC codes \cite{chung2001analysis}, we propose a technique for efficiently computing the channel parameter threshold. We improve the traditional Gaussian approximation approach by approximating the distribution of messages sent by variable nodes using Gaussian mixture distributions. This can significantly reduce the errors introduced by Gaussian approximation while maintaining low computational complexity. It is worth noting that this Gaussian mixture approximation approach can also be applied to LDPC codes, yielding favorable outcomes. 

By selecting a (6, 3)-linear block code and a (7, 4)-Hamming code from the message-invariant subcodes on the GC nodes, which were also used in \cite{liu2019probabilistic} to illustrate that a suitable proportion of GC nodes can reduce the gap to capacity compared to the base LDPC code, 
our density evolution results reveal consistency with the findings in \cite{liu2019probabilistic} for both the BEC and BI-AWGN channel. When the proportion of GC nodes is appropriate, GLDPC codes, under the APP decoder, exhibit a reduced gap to capacity compared to the base LDPC codes.

Finally, we compare GLDPC codes with an appropriate GC proportion to LDPC codes with the same code rate through simulation experiments. The results indicate a significant performance improvement of GLDPC codes under the APP decoder compared to LDPC codes with the same design rate.

The rest of the paper is organized as follows. In Section \uppercase\expandafter{\romannumeral2}, we introduce GLDPC code ensemble, the message-passing decoder, and dentisy evolution on LDPC codes. Section \uppercase\expandafter{\romannumeral3} presents density evolution on GLDPC codes, where the concentration, symmetry, and monotonicity properties are demonstrated. In Section \uppercase\expandafter{\romannumeral4}, we provide a density analysis for GLDPC codes over the BMS channels. The paper concludes in Section \uppercase\expandafter{\romannumeral5}.

\section{Preliminaries}
In this section, we introduce the GLDPC code ensembles that will be analyzed. Next, we provide a brief overview of the message-passing decoder and density evolution on LDPC codes.

\subsection{GLDPC Code Ensembles}

In \cite{tanner1981recursive}, Tanner extended the concept of LDPC codes by introducing the use of block codes as constraint nodes, which are referred to as GC nodes. Instead of connecting to an SPC constraint, the variable nodes connected to the GC nodes must satisfy the constraints within the subcode associated with the GC node. Similar to the work in \cite{liu2019probabilistic}, our analysis focuses on GLDPC codes in which a proportion $t$ of the constraint nodes are designated as GC nodes, while the remaining proportion of $1-t$ nodes maintain the role of SPC nodes. 

Denote the subcode of GLDPC ensembles as $\mathcal{C}$, where $\mathcal{C}$ is a linear block code with code length $K$. The $(\mathcal{C}, J, K, t)$ GLDPC ensemble is defined as follows:
\begin{definition}[$(\mathcal{C}, J, K, t)$ GLDPC ensemble]
Each element of the $(\mathcal{C}, J, K, t)$ GLDPC ensemble is defined by a Tanner graph, as depicted in Figure 1. Within this graph, there exist $n$ variable nodes with degree $J$ and $m=\frac{nJ}{K}$ constraint nodes with degree $K$, as $n$ tends to infinity. Among the constraint nodes, $tm$ are designated as GC nodes, with $\mathcal{C}$ as their subcodes, while $(1-t)m$ are SPC nodes. The creation of a random element in this ensemble involves the use of a uniform random permutation $\pi$ for the $nJ$ edges connecting the variable nodes to the constraint nodes. Specifically, 
following a similar approach to that in \cite{richardson2001capacity}, each node is assigned either $J$ or $K$ ``sockets'' based on whether it functions as a variable node or a constraint node. These variable and constraint sockets are distinctly labeled. The edges within the corresponding bipartite graph are denoted by pairs ${i,\pi(i)}$, where $i=1,\cdots,nJ$. Here, $i$ and $\pi(i)$ respectively represent the corresponding variable node socket and constraint node socket. Furthermore, label the variable nodes contained in the subcodes of length $K$ from 1 to $K$. The variable node connected to the $i_G$-th socket of a specific GC node $G$ corresponds to the $i_G$-th variable node in the corresponding subcode $\mathcal{C}$, $1\leq i_G \leq K$.

Denote $m'$ as the number of rows in the parity-check matrix of $\mathcal{C}$, the design rate $R$ of the $(\mathcal{C}, J, K, t)$ ensemble can be expressed as follows:
\begin{equation}
    R = \frac{n-(\frac{nJ}{K}tm'+\frac{nJ}{K}(1-t))}{n} = 1-\frac{J}{K}-t\frac{J}{K}(m'-1).
\end{equation}
For the $(\mathcal{C}, J, K, 0)$ GLDPC ensemble, which is an LDPC ensemble without GC nodes, we refer to it as the base LDPC code. Denote its design rate as $R_0$, which can be represented as:
\begin{equation}
    R_0 = 1-\frac{J}{K}.
\end{equation}
Therefore, as pointed out in \cite{liu2019probabilistic},
\begin{equation}
    R = R_0 - t(1-R_0)(m'-1).
\end{equation}
\end{definition}

\begin{figure}[!t]
\centering
\includegraphics[width=2.5in]{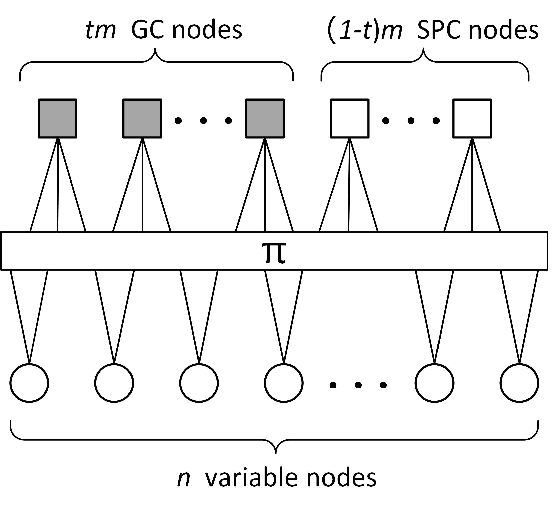}
\caption{The Tanner graph of GLDPC code ensembles.}
\label{fig_1}
\end{figure}

\subsection{Message-Passing Decoder}
Many successful decoding algorithms, such as the BP decoder and successive cancellation (SC) decoder \cite{arikan2009channel}, fall under the category of message-passing decoders. These decoders accomplish decoding through iterative message-passing processes. In the initial iteration, every variable node receives initial messages. Subsequently, in each iteration, each variable node processes all the messages it received to generate new messages, which it then transmits to its adjacent constraint nodes. Simultaneously, each constraint node processes the messages it receives and relays new messages back to each neighboring variable node.

Assume that the codeword $\boldsymbol{c}=\left\{c_1,\cdots,c_n\right\}\in \left\{0,1\right\}^n$ are transmitted through a BMS channel $W$, and $\boldsymbol{y}=\left\{y_1,\cdots,y_n\right\}$ are received signals. Let
	\begin{equation}
		L(y_i)=\ln{\frac{p(y_i|c_i=0)}{p(y_i|c_i=1)}}
	\end{equation}
denote the corresponding LLR of $y_i$.

For the BP decoder, the message update rule of 
variable-to-check (V2C) message is
	\begin{equation}
		L_{i \rightarrow \alpha}^{(l)}=L_{0}+\sum_{h \in N(i) \backslash \alpha} L_{h \rightarrow i}^{(l-1)},
  \label{V2C message update}
	\end{equation}
and the message update rule of check-to-variable (C2V) message is 
	\begin{equation}
		L_{\alpha \rightarrow i}^{(l)}=2 \tanh ^{-1}\left(\prod_{j \in N(\alpha) \backslash i} \tanh \left(L_{j \rightarrow \alpha}^{(l-1)} / 2\right)\right)
  \label{C2V message update}
	\end{equation}
where $L_0$ is the channel message in LLR form, $l$ is the number of iterations, $N(v)$ represents the nodes connected directly to node $v$, $i \rightarrow \alpha$ means from variable node $i$ to check node $\alpha$ and $\alpha \rightarrow i$ means from check node $\alpha$ to variable node $i$. The initial message $L_{i \rightarrow \alpha}^{(0)}$ and $L_{\alpha \rightarrow i}^{(0)}$ is 0.
 
In GLDPC code ensembles, the message-passing process of variable nodes in GLDPC codes is consistent with the case in LDPC codes, as shown in (\ref{V2C message update}). However, for the constraint nodes of GLDPC codes, they contain both SPC nodes and GC nodes, and these two types of nodes handle messages differently. For SPC nodes, the information they transmit is determined based on the information they receive and the SPC codes they correspond to, as described in (\ref{C2V message update}). In contrast, at the GC nodes, the transmission of messages is managed by their corresponding subcodes.

As a commonly used decoding algorithm on GC nodes \cite{lentmaier1999generalized} \cite{boutros1999generalized} \cite{yue2007generalized} with many simplified computation techniques \cite{bahl1974optimal} and variants \cite{mceliece1996bcjr} \cite{johansson1998simple}, the APP decoder holds a significant position in the study of GLDPC codes. Its decoding rules are as follows: For a GC node $G$ with subcode $\mathcal{C}$, let $v_i$ denote its $i$th neighboring variable node, $\boldsymbol{L}_{\thicksim i}$ denote the message received its neighboring variable nodes other than $v_i$, and $\boldsymbol{c'}$ denote the codewords of $\mathcal{C}$. The message to $v_i$ from $G$ under the APP decoder is
\begin{equation}
\begin{split}
    L_i &= \log\frac{P(\boldsymbol{L}_{\thicksim i}|v_i=0)}{P(\boldsymbol{L}_{\thicksim i}|v_i=1)}\\
    &=\log\frac{\sum\limits_{\boldsymbol{c'}:c'_i=0}P(\boldsymbol{L}_{\thicksim i}|\boldsymbol{c'})}{\sum\limits_{\boldsymbol{c'}:c'_i=1}P(\boldsymbol{L}_{\thicksim i}|\boldsymbol{c'})}\\
    &=\log\Bigg(\sum\limits_{\boldsymbol{c'}:c'_i=0} \exp\Big(\sum\limits_{j\neq i} \mathbb{I}[c'_j=0]\boldsymbol{L}_j\Big)\Bigg)\\
    &\;\:\;-\log\Bigg(\sum\limits_{\boldsymbol{c'}:c'_i=1} \exp\Big(\sum\limits_{j\neq i} \mathbb{I}[c'_j=0]\boldsymbol{L}_j\Big)\Bigg),
\end{split}
\label{APP}
\end{equation}
where $\mathbb{I}[c'_j=0]=1$ if $c'_j=0$, otherwise $\mathbb{I}[c'_j=0]=0$.

\subsection{Density Evolution}
Density evolution, as introduced by Richardson and Urbanke in \cite{richardson2001capacity}, can effectively characterize the asymptotic performance of the message-passing decoder, i.e., the performance when the code length $n$ tends to infinity. 

In the following, the notations used in \cite{richardson2008modern} will be employed.  Let $D_i$ be the density of $L(y_i)$ conditioned on $x_i=0$. We call $D_i$ the L-density of $y_i$. Let $l$ denote the number of decoding iterations.

Denote the convolution operations on the variable node and SPC node by two binary operators $\circledast$ and $\boxast$, respectively. For L-density $D_1$, $D_2$, and any Borel set $E\in \overline{\mathbb{R}}$, define
\begin{equation}
		(D_1 \circledast D_2)(E)\triangleq \int_{\overline{\mathbb{R}}} D_1(E-\alpha)D_2(d\alpha),
	\end{equation}
	\begin{equation}
		(D_1 \boxast D_2)(E)\triangleq \int_{\overline{\mathbb{R}}} D_1\left(2\tanh^{-1}\left(\frac{\tanh{\frac{E}{2}}}{\tanh{\frac{\alpha}{2}}}\right)\right)D_2(d\alpha),
	\end{equation}
where $\int_{\overline{\mathbb{R}}}f(\alpha)D(d\alpha)$ is the Lebesgue integral with respect to probability measure $D$ on extended real numbers $\overline{\mathbb{R}}$. 

If the factor graph is cycle-free, then the update rule in (\ref{V2C message update}), (\ref{C2V message update}) can be written in the form of L-density
\begin{equation}
    D_{(i,\alpha)}^{(l)} = D_i\circledast\Big(\circledast_{h\in N(i)\backslash \alpha} D_{(h,i)}^{(l-1)}\Big),
\end{equation}
\begin{equation}
    D_{(i,\alpha)}^{(l)} = \boxast_{j\in N(\alpha)\backslash i} D_{(j,\alpha)}^{(l-1)}.
\end{equation}

However, for continuous-output channels such as the BI-AWGN channel, the calculation of the message distribution becomes computationally challenging. In \cite{chung2001analysis}, Gaussian approximation is employed to simplify the analysis of the decoding performance. By approximating the distribution of messages as Gaussian distribution, without much sacrifice in accuracy, a one-dimensional quantity, namely, the mean of a Gaussian distribution, can act as a faithful surrogate for the message density, which in contrast, is an infinite-dimensional vector. 

Denote the means of the messages sent from variable nodes and check nodes in $l$-th iteration by $m_V^{(l)}$ and $m_S^{(l)}$, respectively. The variable nodes are with degree $d_v$ and the check nodes are with degree $d_c$. Then (\ref{V2C message update}) simply becomes 
\begin{equation}
    m_V^{(l)} = m_V^{(0)} + (d_v -1)m_S^{(l-1)},
\end{equation}
where $m_V^{(0)}$ is the mean of the channel message. The update rule for $m_S^{(l)}$ is
\begin{equation}
   m_S^{(l)}= \phi_S(m_V^{(l)}) = \phi^{-1}\Bigg(1-\Big[1-\phi(m_V^{(l)})\Big]^{d_c-1}\Bigg),
    \label{Gaussian_old_equation}
\end{equation}
where
\begin{equation}
\label{eq_ga1}
    \phi(x) =\left\{
\begin{aligned}
&   1-\frac{1}{\sqrt{4\pi x}}\int_{\mathbb{R}}\tanh \frac{u}{2}e^{-\frac{(u-x)^2}{4x}}du,\;\;\; if\;  x > 0,\\
&   1, \quad\quad\quad\quad\quad\quad\quad\quad\quad\quad\quad\quad\; if\; x=0.
\end{aligned}
\right.
\end{equation}

\section{Properties of Density Evolution on GLDPC Codes}
The density evolution algorithm introduced in \cite{richardson2001capacity} serves as an efficient tool for the asymptotic analysis of LDPC codes. In their seminal work \cite{richardson2001capacity}, Richardson and Urbanke demonstrated that the behavior of individual instances concentrates around its average behavior (of the code and the noise). This average behavior progressively aligns with the cycle-free LDPC code scenario as the code length increases. The validity of this centrality in LDPC codes is essentially determined by the nature of LDPC code BP decoding performed ``locally''. Since the nature of decoding locally also holds for ensembles of GLDPC codes, the concentration property remains applicable to GLDPC code ensembles. Consequently, the asymptotic performance analysis of GLDPC codes can be conducted on cycle-free graphs, and the incoming information to a node in GLDPC codes can be reasonably considered as mutually independent.



In this section, We theoretically justify the rationality of density evolution on GLDPC codes. We introduce and establish the symmetry conditions extended for GLDPC codes. We demonstrate that, under the extended symmetry conditions, the error probability of GLDPC codes becomes independent of the specific transmitted codeword. Moreover, we show that GLDPC codes, when decoded using the APP decoder, exhibit the desirable property of monotonicity, which ensures the existence of a channel threshold. 

\subsection{Symmetry Condition}
The symmetry conditions for LDPC codes outlined in \cite{richardson2001capacity} encompass various aspects, including channel symmetry, check node symmetry, and variable node symmetry. However, in the context of GLDPC codes, where check nodes are categorized into SPC nodes and GC nodes, it becomes necessary to extend the check node symmetry condition to suit the specific characteristics of GLDPC codes.

For simplicity, assume the code sequence $\boldsymbol{c} = $$\{c_1,c_2,\cdots,c_n\}$ is transmitted by BPSK modulation,  which maps the codeword into a bipolar sequence $\boldsymbol{x} = \{x_1,x_2,\cdots,x_n\}$, according to $x_i = 1-2c_i,i\in[1,n]$. Without causing confusion, we also refer to $\boldsymbol{x}$ as the codeword in the rest of this paper. To be concrete, consider a discrete case and assume that the output alphabet is 
\begin{equation}
    \mathcal{O} \triangleq \{-q_o, -(q_o-1),\cdots,-1,0,1,\cdots,(q_o-1),q_o\}
\end{equation}
and that the message alphabet is 
\begin{equation}
    \mathcal{M} \triangleq \{-q,-(q-1),\cdots,-1,0,1,\cdots,(q-1),q\}.
\end{equation}

Let $\Psi_V^{(l)}:\mathcal{O}\times\mathcal{M}^{J-1}\rightarrow\mathcal{M}$, $l\geq 1$, denote the variable node message map, let $\Psi_S^{(l)}:\mathcal{M}^{K-1}\rightarrow\mathcal{M}$, $l\geq 0$, denote the SPC node message map, and let $\Psi_{Gi}^{(l)}:\mathcal{M}^{K-1}\rightarrow\mathcal{M}$, $l\geq 0$, denote the GC node message map to its $i$-th connected variable node, $i\in\{1,\cdots,K\}$, as a function of $l\in\mathbb{N}$. For completeness, let $\Psi_V^{(0)}:\mathcal{O}\rightarrow\mathcal{M}$ denote the initial message map.

\quad\par

\begin{definition}[Extended Symmetry Conditions]
\label{def_Extended Symmetry Conditions}
\begin{itemize}
\item{\textbf{Channel symmetry:}} The channel is output-symmetric, i.e.,
\begin{equation}
    p(y_T=q|x_T = 1) = p(y_T=-q|x_T = -1),
\end{equation}
for all $T \in [1,n]$.

\item{\textbf{Variable node symmetry:}} Signs inversion invariance of variable node message maps holds
\begin{equation}
    \Psi_V^{(l)}(-m_0,-m_1,\cdots,-m_{J-1}) = -\Psi_V^{(l)}(m_0,m_1,\cdots,m_{J-1}),
\end{equation}
$l\geq1$, and
\begin{equation}
    \Psi_V^{(0)}(-m_0) = -\Psi_V^{(0)}(m_0).
\end{equation}
{\color{black}
where $m_T,T\in [1,J-1]$ denotes the message received from its $J-1$ neighboring check nodes.
} 
\item{\textbf{SPC node symmetry:}} Signs factor out of SPC node message maps
\begin{equation}
    \Psi_S^{(l)}(b_1m_1,\cdots,b_{K-1}m_{K-1}) = \Psi_S^{(l)}(m_1,\cdots,m_{K-1})\Bigg(\prod \limits_{i=1}^{K-1}b_i\Bigg)
\end{equation}
for any $\pm1$ sequence $(b_1,\cdots,b_{K-1})$ and {\color{black}$m_T,T\in [1,K-1]$ denotes the message received from its $K-1$ neighboring variable nodes.
}
\item{\textbf{GC node symmetry:}} The sign of the variable to receive information factors out of GC node message maps
\begin{equation}
\begin{split}
    &\Psi_{Gi}^{(l)}(b_1m_1,\cdots,b_{i-1}m_{i-1},b_{i+1}m_{i+1},\cdots,b_{K}m_{K})\\
    = &b_i\cdot\Psi_{Gi}^{(l)}(m_1,\cdots,m_{i-1},m_{i+1},\cdots,m_{K}),
\end{split}
\end{equation}
if the sequence $(b_1,\cdots,b_{K})$ is a codeword of the corresponding subcode $\mathcal{C}$.
\end{itemize}
\end{definition}

We follow the patterns in \cite{richardson2001capacity} and \cite{richardson2008modern} to prove the following lemmas.

\quad\par

\begin{lemma}[Conditional Independence of Error Probability Under Symmetry]
    Let $\mathcal{G}$ be the Tanner graph of a given GLDPC code and for a given message-passing algorithm let $P_e^{(l)}(\boldsymbol{x})$ denote the conditional (bit or block) probability of error after the $l$-th decoding iteration, assuming that codeword $\boldsymbol{x}$ was sent. If the channel and the decoder fulfill the extended symmetry conditions stated in Definition \ref{def_Extended Symmetry Conditions} then $P_e^{(l)}(\boldsymbol{x})$ is independent of $\boldsymbol{x}$. 
\end{lemma}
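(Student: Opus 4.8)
The plan is to adapt the sign-flipping coupling argument of \cite{richardson2001capacity} to the GLDPC setting. Fix an arbitrary codeword $\boldsymbol{x}$ of the GLDPC code and let $\boldsymbol{1}$ denote the all-ones BPSK sequence (the image of the all-zero codeword). First I would invoke \textbf{channel symmetry} to couple the two experiments on a common probability space: when $\boldsymbol{x}$ is transmitted, the channel output $\boldsymbol{y}$ has the same law as $(x_1 y_1', \ldots, x_n y_n')$, where $\boldsymbol{y}'$ is the output when $\boldsymbol{1}$ is transmitted. It therefore suffices to show that, on this coupled space, a (bit or block) error occurs in the $\boldsymbol{x}$-experiment if and only if it occurs in the $\boldsymbol{1}$-experiment.

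The core is an induction on the iteration index $l$ establishing the invariant: for every edge $(i,\alpha)$ of the cycle-free computation tree, the message passed along that edge in the $\boldsymbol{x}$-experiment equals $x_i$ times the corresponding message in the $\boldsymbol{1}$-experiment, in both directions. The base case handles the initial variable-node map and uses its sign-inversion property to get $\Psi_V^{(0)}(x_i y_i') = x_i\,\Psi_V^{(0)}(y_i')$. For the inductive step at a variable node, \textbf{variable node symmetry} pulls the common factor $x_i$ through $\Psi_V^{(l)}$. At an SPC node $\alpha$, \textbf{SPC node symmetry} factors out $\prod_{j \in N(\alpha)\setminus i} x_j$; since $\boldsymbol{x}$ satisfies the parity check $\alpha$, this product equals $x_i$, which is exactly the factor required by the invariant. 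At a GC node $G$, the neighbors of $G$ carry the scaling signs $(x_j)_{j \in N(G)}$, which, read in socket order, form a codeword of the subcode $\mathcal{C}$ because $\boldsymbol{x}$ is a GLDPC codeword; hence \textbf{GC node symmetry} applies and factors out precisely $x_i$ from $\Psi_{Gi}^{(l)}$. This closes the induction.

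Finally, the decoder's decision at variable node $i$ after $l$ iterations is a fixed function of the channel value and the incoming C2V messages at $i$; by the invariant all of these quantities are scaled by $x_i$ in the $\boldsymbol{x}$-experiment relative to the $\boldsymbol{1}$-experiment, and using the variable-node sign-inversion property once more, the hard decision at $i$ in the $\boldsymbol{x}$-experiment equals $x_i$ times the hard decision in the $\boldsymbol{1}$-experiment. Hence bit $i$ is decoded correctly in one experiment if and only if it is in the other, giving equality of the conditional bit-error probabilities; taking the union over $i$ yields the block-error statement. Thus $P_e^{(l)}(\boldsymbol{x}) = P_e^{(l)}(\boldsymbol{1})$ for every codeword $\boldsymbol{x}$, which is the claim.

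I expect the main obstacle to be the GC-node step: one must be careful that the bookkeeping between a GC node's sockets and the coordinates of its subcode $\mathcal{C}$ (as fixed in Definition \ref{def_Extended Symmetry Conditions}'s ambient construction) is consistent, so that the tuple of scaling signs seen by $G$ genuinely is a codeword of $\mathcal{C}$ and the GC node symmetry condition is applicable; a secondary point is to make sure the argument never implicitly assumes anything about how the GC message maps depend on $l$, which is precisely why the symmetry conditions were stated for all $l$.
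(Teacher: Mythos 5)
Your proposal is correct and takes essentially the same route as the paper's proof: the multiplicative coupling of the channel (the all-one experiment playing the role of the noise realization $\boldsymbol{z}$), followed by an induction over iterations showing every message in the $\boldsymbol{x}$-experiment equals $x_i$ times the corresponding message in the reference experiment, using variable node symmetry, SPC node symmetry together with $\prod_{j\in N(\alpha)}x_j=1$, and GC node symmetry together with the observation that the signs seen by a GC node form a codeword of $\mathcal{C}$. The only cosmetic difference is your mention of a cycle-free computation tree, which neither your induction nor the paper's actually requires --- the edge-wise argument goes through on any Tanner graph.
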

\begin{proof}
    Let $p(q)$ denote the channel transition probability $p(y=q|x = 1)$. Then any binary-input memoryless output-symmetric channel can be modeled multiplicatively as 
    \begin{equation}
        y_T = x_Tz_T,
    \end{equation}
    where $x_T$ is the input bit, $y_T$ is the channel output, and $z_T$ are i.i.d. random variables with distribution defined by $Pr\{z_T=q\}=p(q)$. Let $\boldsymbol{x}$ denote an arbitrary codeword and let $\boldsymbol{y}=\boldsymbol{x}\boldsymbol{z}$ be an observation from the channel after transmitting $\boldsymbol{x}$, where $\boldsymbol{z}$ denotes the channel realization (multiplication is componentwise and all three quantities are vectors of length $n$).

    Let $v_i$ denote an arbitrary variable node. Let $s_j$ denote one of its neighboring SPC nodes and let $g_k$ denote one of its neighboring GC nodes. For any received sequence $\boldsymbol{w}$, let $m_{ij}^{(l)}(\boldsymbol{w})$ denote the message sent from $v_i$ to $s_j$ in iteration $l$ assuming $\boldsymbol{w}$ was received and let $m_{ik}^{(l)}(\boldsymbol{w})$ denote the message sent from $v_i$ to $g_k$ in iteration $l$ assuming $\boldsymbol{w}$ was received. Similarly, let $m_{ji}^{(l)}(\boldsymbol{w})$ denote the message sent from $s_j$ to $v_i$ in iteration $l$ assuming $\boldsymbol{w}$ was received and let $m_{ki}^{(l)}(\boldsymbol{w})$ denote the message sent from $g_k$ to $v_i$ in iteration $l$ assuming ${\boldsymbol{w}}$ was received. 

    From the variable node symmetry at $l=0$ we have $m_{ij}^{(0)}(\boldsymbol{y}) = x_im_{ij}^{(0)}(\boldsymbol{z})$ and $m_{ik}^{(0)}(\boldsymbol{y}) = x_im_{ik}^{(0)}(\boldsymbol{z})$. Assume now that in iteration $l$ we have $m_{ij}^{(l)}(\boldsymbol{y}) = x_im_{ij}^{(l)}(\boldsymbol{z})$ and $m_{ik}^{(l)}(\boldsymbol{y}) =x_im_{ik}^{(l)}(\boldsymbol{z})$. Since $\boldsymbol{x}$ is a codeword, we have $\prod_{t:\exists e=(v_t,s_j)}x_t=1$. From the SPC node symmetry condition we conclude that
    \begin{equation}
        m_{ji}^{(l+1)}(\boldsymbol{y}) = x_im_{ji}^{(l+1)}(\boldsymbol{z}).
    \end{equation}
    As for GC nodes, the value $\boldsymbol{x}$ of the variable nodes that are connected to a GC node forms a valid codeword of the corresponding subcode $\mathcal{C}$. Thus from the GC node symmetry condition we conclude that in iteration $l+1$ the message sent from GC node $g_k$ to variable node $v_i$ is 
    \begin{equation}
        m_{ki}^{(l+1)}(\boldsymbol{y}) = x_im_{ki}^{(l+1)}(\boldsymbol{z}).
    \end{equation}
    Furthermore, from the variable node symmetry condition, it follows that in iteration $l+1$ the message sent from variable node $v_i$ to SPC node $s_j$ is
    \begin{equation}
        m_{ij}^{(l+1)}(\boldsymbol{y}) = x_im_{ij}^{(l+1)}(\boldsymbol{z})
    \end{equation}
    and the message sent from variable node $v_i$ to GC node $g_k$ is
    \begin{equation}
        m_{ik}^{(l+1)}(\boldsymbol{y}) = x_im_{ik}^{(l+1)}(\boldsymbol{z})
    \end{equation}
    Thus, by induction, all messages to and from variable node $v_i$ when $\boldsymbol{y}$ is received are equal to the product of $x_i$ and the corresponding message when $\boldsymbol{z}$ is received. Hence, both decoders commit exactly the same number of errors (if any), which proves the claim. 
\end{proof}

\quad\par

\begin{lemma}
    In the BMS channel, the APP decoder satisfies the extended symmetry condition.
\end{lemma}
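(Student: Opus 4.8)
\emph{Proof strategy.} The plan is to verify the four items of Definition~\ref{def_Extended Symmetry Conditions} one by one; three of them require no new argument. \textbf{Channel symmetry} is simply the standing hypothesis that $W$ is a BMS channel, and from $p(y\mid x=1)=p(-y\mid x=-1)$ one gets $L(-y)=-L(y)$, so the initial map $\Psi_V^{(0)}$ is odd. \textbf{Variable-node symmetry} holds because a variable node of a GLDPC code processes messages exactly as in an LDPC code: the outgoing message is the sum of the channel LLR and the incoming messages, as in~(\ref{V2C message update}), and a sum is an odd function of its arguments, so negating all inputs negates the output for every $l\ge 1$. \textbf{SPC-node symmetry} is the classical $\tanh$ rule~(\ref{C2V message update}): since $\tanh$ and $\tanh^{-1}$ are odd, replacing $m_j$ by $b_jm_j$ multiplies $\prod_j\tanh(m_j/2)$, and hence the output, by $\prod_j b_j$; this is the same computation as in~\cite{richardson2001capacity}. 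So the only substantive point is the \textbf{GC-node symmetry}, which is where the APP rule~(\ref{APP}) and the linearity of the subcode $\mathcal{C}$ come in.

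For the GC-node case I would first rewrite the APP rule in a more symmetric form. Writing each coordinate in bipolar form, $\sigma_j(\boldsymbol{c}')=1-2c'_j\in\{-1,+1\}$, we have $\mathbb{I}[c'_j=0]=\frac{1}{2}\big(1+\sigma_j(\boldsymbol{c}')\big)$, so the exponent $\sum_{j\ne i}\mathbb{I}[c'_j=0]L_j$ in~(\ref{APP}) equals $\frac{1}{2}\sum_{j\ne i}L_j+\frac{1}{2}\sum_{j\ne i}\sigma_j(\boldsymbol{c}')L_j$. The first term is independent of $\boldsymbol{c}'$ and cancels in the log-ratio, so~(\ref{APP}) becomes
\begin{equation}
L_i=\log\frac{\sum_{\boldsymbol{c}':\,c'_i=0}\exp\big(\frac{1}{2}\sum_{j\ne i}\sigma_j(\boldsymbol{c}')L_j\big)}{\sum_{\boldsymbol{c}':\,c'_i=1}\exp\big(\frac{1}{2}\sum_{j\ne i}\sigma_j(\boldsymbol{c}')L_j\big)}.
\end{equation}
Thus $\Psi_{Gi}^{(l)}$ is (up to the quantization onto $\mathcal{M}$, addressed below) this deterministic function of the incoming LLRs $(L_j)_{j\ne i}$, and it suffices to check the claimed sign-factoring for this function.

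Now substitute the twisted inputs $b_jL_j$, $j\ne i$, where $(b_1,\dots,b_K)$ is the bipolar image of a codeword $\boldsymbol{d}\in\mathcal{C}$. The key identity is $\sigma_j(\boldsymbol{c}')\,b_j=\sigma_j(\boldsymbol{c}'\oplus\boldsymbol{d})$, so the exponent attached to $\boldsymbol{c}'$ turns into the one attached to $\boldsymbol{c}''\triangleq\boldsymbol{c}'\oplus\boldsymbol{d}$. Since $\mathcal{C}$ is linear, $\boldsymbol{c}'\mapsto\boldsymbol{c}'\oplus\boldsymbol{d}$ is a bijection of $\mathcal{C}$ onto itself carrying $\{\boldsymbol{c}':c'_i=0\}$ onto $\{\boldsymbol{c}'':c''_i=d_i\}$ and $\{\boldsymbol{c}':c'_i=1\}$ onto $\{\boldsymbol{c}'':c''_i=1\oplus d_i\}$. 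Hence, if $d_i=0$ (equivalently $b_i=1$) the numerator and denominator of the displayed ratio are unchanged and the output is $L_i=b_iL_i$, while if $d_i=1$ (equivalently $b_i=-1$) they are interchanged and the output is $-L_i=b_iL_i$. In both cases the sign $b_i$ of the target coordinate factors out, which is precisely the GC-node symmetry; together with the three observations above, this proves the lemma.

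\textbf{Main obstacle.} The only non-routine step is the GC-node case, and its heart is the realization that — after discarding the $\boldsymbol{c}'$-independent additive constant in the exponent — twisting the incoming LLRs by a codeword $\boldsymbol{d}$ is the same as relabeling the codeword sum via $\boldsymbol{c}'\mapsto\boldsymbol{c}'\oplus\boldsymbol{d}$, which is a bijection on $\mathcal{C}$ precisely because $\mathcal{C}$ is linear and which toggles the sign constraint at coordinate $i$ exactly when $d_i=1$. A minor technical point to confirm is that any quantization of the APP output onto the finite alphabet $\mathcal{M}$ is carried out by a sign-symmetric (odd) rule, so that it preserves all four symmetries.
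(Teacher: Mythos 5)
Your proof is correct and takes essentially the same route as the paper: the channel, variable-node, and SPC-node conditions are quoted from the LDPC literature, and the GC-node condition is obtained by showing that twisting the incoming LLRs by (the bipolar image of) a codeword $\boldsymbol{d}\in\mathcal{C}$ amounts, by linearity, to the re-indexing $\boldsymbol{c}'\mapsto\boldsymbol{c}'\oplus\boldsymbol{d}$ of the two codeword sums, which swaps numerator and denominator exactly when $d_i=1$, so the sign $b_i$ factors out. The only difference is presentational: the paper runs this argument on the probabilistic form $\log\bigl(p_{\boldsymbol{M}_{\thicksim i}|X_i}(\cdot\mid 1)/p_{\boldsymbol{M}_{\thicksim i}|X_i}(\cdot\mid -1)\bigr)$ and invokes the BMS-channel symmetry to move the twist from the messages onto the codeword, whereas you make that same step an explicit algebraic identity in the LLR-domain formula (\ref{APP}) via $\mathbb{I}[c'_j=0]=\tfrac12\bigl(1+\sigma_j(\boldsymbol{c}')\bigr)$.
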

\begin{proof}
    The symmetry conditions of the channel, variable nodes, and SPC nodes are satisfied according to \cite{richardson2001capacity}. Assume that the $\pm1$ sequence $(b_1,\cdots,b_{K})$ is a codeword of the corresponding subcode $\mathcal{C}$. Denote $\boldsymbol{m}_{\thicksim i}$ to be the sequence $(m_1,\cdots,m_{i-1},m_{i+1},\cdots,m_{K})$, $i\in\{1,\cdots,K\}$, and denote $\boldsymbol{b}$ to be the sequence $(b_1,\cdots,b_{K})$, $\boldsymbol{b}_{\thicksim i}$ to be the sequence $(b_1,\cdots,b_{i-1},b_{i+1},\cdots,b_K)$, $i\in\{1,\cdots,K\}$. Then
    \begin{equation}
    \begin{split}
        \Psi_{Gi}^{(l)}(\boldsymbol{m}_{\thicksim i})
        \overset{(a)}{=}&\log\frac{p_{\boldsymbol{M}_{\thicksim i}|X_i}(\boldsymbol{m}_{\thicksim i}|1)}{p_{\boldsymbol{M}_{\thicksim i}|X_i}(\boldsymbol{m}_{\thicksim i}|-1)}\\
        \overset{(b)}{=}&\log\frac{\sum_{\boldsymbol{x'}\in\mathcal{C}:x'_i=1}p_{\boldsymbol{M}_{\thicksim i}|\boldsymbol{X}}(\boldsymbol{m}_{\thicksim i}|\boldsymbol{x'})}{\sum_{\boldsymbol{x'}\in\mathcal{C}:x'_i=-1}p_{\boldsymbol{M}_{\thicksim i}|\boldsymbol{X}}(\boldsymbol{m}_{\thicksim i}|\boldsymbol{x'})}\\
        \overset{(c)}{=}&\log\frac{\sum_{\boldsymbol{x'}\in\mathcal{C}:x'_i=1}p_{\boldsymbol{M}_{\thicksim i}|\boldsymbol{X}}(\boldsymbol{m}_{\thicksim i}\boldsymbol{b}_{\thicksim i}|\boldsymbol{x'}\boldsymbol{b})}{\sum_{\boldsymbol{x'}\in\mathcal{C}:x'_i=-1}p_{\boldsymbol{M}_{\thicksim i}|\boldsymbol{X}}(\boldsymbol{m}_{\thicksim i}\boldsymbol{b}_{\thicksim i}|\boldsymbol{x'}\boldsymbol{b})}\\
        \overset{(d)}{=}&\log\frac{\sum_{\boldsymbol{x'}\in\mathcal{C}:x'_i=b_i}p_{\boldsymbol{M}_{\thicksim i}|\boldsymbol{X}}(\boldsymbol{m}_{\thicksim i}\boldsymbol{b}_{\thicksim i}|\boldsymbol{x'})}{\sum_{\boldsymbol{x'}\in\mathcal{C}:x'_i=-b_i}p_{\boldsymbol{M}_{\thicksim i}|\boldsymbol{X}}(\boldsymbol{m}_{\thicksim i}\boldsymbol{b}_{\thicksim i}|\boldsymbol{x'})}\\
        \overset{(e)}{=}&\log\frac{p_{\boldsymbol{M}_{\thicksim i}|X_i}(\boldsymbol{m}_{\thicksim i}\boldsymbol{b}_{\thicksim i}|b_i)}{p_{\boldsymbol{M}_{\thicksim i}|X_i}(\boldsymbol{m}_{\thicksim i}\boldsymbol{b}_{\thicksim i}|-b_i)}\\
        \overset{(f)}{=}&b_i\Psi_{Gi}^{(l)}(\boldsymbol{m}_{\thicksim i}\boldsymbol{b}_{\thicksim i}), 
    \end{split}
    \end{equation}
    The validity of $(a)$, $(b)$, $(e)$, and $(f)$ is attributed to the definition of the APP decoder. The validity of $(c)$ is attributed to the property of the BMS channel, and the validity of $(d)$ holds due to the linearity of the subcode $\mathcal{C}$. Therefore, the GC node symmetry condition is satisfied.
\end{proof}

\quad\par


In the following sections, we will assume that the all-one codeword is transmitted after modulation, as the symmetry condition is satisfied.

The density function $f$ of the message is considered symmetric if it satisfies the condition $f(m) = e^m f(-m)$. For Gaussian distribution that satisfies the symmetry condition, it can be proven that its variance is equal to twice the mean. Therefore, knowing the mean of the distribution is sufficient to determine the entire Gaussian distribution. In the study by Richardson \cite{richardson2001design}, it was proven that for a given binary-input memoryless output-symmetric channel, the density functions of messages exchanged between variable and check nodes during BP maintain symmetric. This property is established by illustrating that the density of channel messages exhibits symmetry, and this symmetry persists as messages traverse between variable and check nodes. According to \cite[Theorem 4.29]{richardson2008modern}, this symmetry also holds when messages are exchanged at GC nodes using the APP decoder. Consequently, we derive the following lemma.

\quad\par

\begin{lemma}
    For a binary-input memoryless output-symmetric channel, the density functions of messages sent from variable nodes are symmetric for GLDPC codes under the APP decoder. 
    \label{lemma_symmetric_density}
\end{lemma}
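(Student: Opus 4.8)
The plan is to prove the statement by induction on the iteration number $l$, mirroring the argument given for LDPC codes in \cite{richardson2001design} and adding the treatment of the two kinds of constraint nodes. Throughout I work on the cycle-free (tree-like) neighbourhood guaranteed by the concentration discussion, so that all messages entering a given node are independent, and I assume the all-one codeword is transmitted after BPSK, which is legitimate by the conditional-independence lemma together with the lemma asserting that the APP decoder meets the extended symmetry conditions. Recall that a density $f$ is symmetric if $f(m)=e^{m}f(-m)$, and that the class of symmetric densities is closed under both convolutions defined in Section II: under $\circledast$ because the sum of independent symmetric random variables again has a symmetric density, and under $\boxast$ by the check-node symmetry computation of \cite{richardson2001design}. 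I will also use that the initial L-density produced from the channel output is symmetric, which is precisely the output-symmetry of the BMS channel $W$.

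For the base case $l=1$, every message sent from a variable node equals the channel LLR (the incoming C2V messages are all $0$), whose density is symmetric as just noted. For the inductive step, suppose that at iteration $l$ the density of every message sent from a variable node along any edge is symmetric. These messages then traverse constraint nodes of two types. At an SPC node the outgoing density is the $\boxast$-convolution of the independent symmetric incoming densities, hence symmetric. At a variable node, the message leaving along a given edge in iteration $l+1$ is the $\circledast$-convolution of the channel density with the incoming C2V densities from its SPC and GC neighbours; once the GC outputs are known to be symmetric, this is a convolution of independent symmetric densities and is therefore symmetric. So the induction closes as soon as the GC-node update is handled.

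For a GC node $G$ with subcode $\mathcal{C}$, fix the target variable node $v_i$ and regard the $K-1$ incoming messages $m_j$, $j\neq i$, as outputs of virtual channels: since the all-one codeword is sent we have $x_j=1$ for all $j$, and by the inductive hypothesis each conditional density $p_{M_j\mid X_j}(\cdot\mid 1)$ is symmetric, so each virtual channel is itself a BMS channel, and by the tree assumption the messages are independent. By \eqref{APP} the outgoing message $\Psi_{Gi}^{(l+1)}(\boldsymbol{m}_{\sim i})$ is exactly the extrinsic LLR of bit $i$ in the linear code $\mathcal{C}$ observed through the product of these BMS channels. The conditional density of such an extrinsic APP LLR is symmetric — this is the content of \cite[Theorem 4.29]{richardson2008modern} applied to APP/MAP decoding of the linear code $\mathcal{C}$ over a BMS channel — and, by the GC-node symmetry already established for the APP decoder, this symmetry is the same regardless of which subcode codeword was transmitted. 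Hence the density of the message from $G$ to $v_i$ is symmetric.

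Combining the three node updates, symmetry of all V2C densities at iteration $l$ yields symmetry of all V2C densities at iteration $l+1$, and the claim follows for every $l$ by induction. I expect the GC-node step to be the only genuine obstacle: one must set up the virtual-channel / extrinsic-channel reformulation carefully so that \cite[Theorem 4.29]{richardson2008modern} — symmetry preservation under MAP decoding of a linear code on a symmetric channel — applies verbatim, and one must appeal to the cycle-free assumption to guarantee that the $K-1$ incoming messages are independent, so that their joint density factors into the product of symmetric marginals that the argument needs. The variable-node and SPC-node steps are routine once the closure of symmetric densities under $\circledast$ and $\boxast$ is invoked.
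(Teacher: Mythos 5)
Your proposal is correct and follows essentially the same route as the paper: the paper likewise argues that the channel L-density is symmetric, that symmetry is preserved through variable-node ($\circledast$) and SPC-node ($\boxast$) updates as in \cite{richardson2001design}, and that the GC-node (APP) update preserves symmetry by appeal to \cite[Theorem 4.29]{richardson2008modern}. Your write-up merely makes the induction, the all-one-codeword reduction, and the virtual-BMS-channel reformulation at the GC node explicit, which the paper leaves as a brief citation-based sketch.
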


\subsection{Monotonicity}
Assume that there is a given class of channels parameterized by $\alpha$, and these channels fulfill the required extended symmetry condition. The parameter $\alpha$ may be real-valued, like the crossover probability $\epsilon$ for the BSC and the standard deviation $\sigma$ for the BI-AWGNC, or may take values in a larger domain.

In \cite{richardson2001capacity}, Richardson and Urbanke showed the monotonicity property for a given LDPC ensemble under 
the BP decoder, with the assumption that the Tanner graph is cycle-free. If for a fixed parameter $\alpha$, the expected fraction of incorrect messages tends to zero with an increasing number of iterations, then this also holds for every parameter $\alpha^{\prime}$ such that $\alpha^{\prime} \leq \alpha$. Therefore, the supremum of all such parameters for which the fraction of incorrect messages tends to zero can be found, which is called the threshold. For any parameter smaller than the threshold, the expected error probability will tend to zero with an increasing number of iterations. This property remains valid for GLDPC codes under the APP decoder.

\quad\par
\begin{lemma}
Let $W$ and $W^{\prime}$ be two given memoryless channels that fulfill the extended channel symmetry condition. Let $W$ and $W^{\prime}$ be represented by their transition probability $p_W(\boldsymbol{y}\mid\boldsymbol{x})$ and $p_{W^{\prime}}(\boldsymbol{y}\mid\boldsymbol{x})$. Assume that $W^{\prime}$ is physically degraded with respect to $W$, which means that $p_{W^{\prime}}(\boldsymbol{y}^{\prime}\mid\boldsymbol{x})=p_Q(\boldsymbol{y}^{\prime}\mid \boldsymbol{y})p_W(\boldsymbol{y}\mid\textbf{x})$ for some auxiliary channel $Q$. For a given GLDPC code and the APP decoder, let $p$ be the expected fraction of incorrect messages passed at the $l$-th decoding iteration under tree-like neighborhoods and transmission over channel $W$, and let $p^{\prime}$ denote the equivalent quantity for transmission over channel $W^{\prime}$. Then $p\leq p^{\prime}$. 
\end{lemma}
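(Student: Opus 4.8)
The plan is to adapt the physical-degradation argument of Richardson and Urbanke to the density evolution of GLDPC codes, the only new ingredient being that APP processing at a GC node preserves degradation. By the Conditional Independence of Error Probability Lemma I may assume that the all-one (post-modulation) codeword is transmitted, so that under tree-like neighborhoods the standard density-evolution recursion applies and, at every iteration $l$, the message carried along an edge of a given type has a well-defined L-density, symmetric by Lemma \ref{lemma_symmetric_density} and the discussion preceding it. Let $D^{(l)}$ and $\widetilde{D}^{(l)}$ denote the edge-type-averaged message densities at iteration $l$ for transmission over $W$ and over $W'$, respectively. I will use from \cite{richardson2008modern} the partial order of physical degradation on symmetric L-densities ($a\preceq b$ meaning the BMS channel associated with $a$ is physically degraded with respect to the one associated with $b$), together with two standard facts: (i) $\preceq$ is preserved, in each argument, by the variable-node convolution $\circledast$ and the SPC-node operation $\boxast$; and (ii) the error functional $\mathcal{E}(a)$, the probability of a negative LLR under a $+1$ input with a zero counted with weight $1/2$, is nondecreasing under $\preceq$.

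The crux is the claim that APP processing at a GC node preserves degradation: at a GC node with subcode $\mathcal{C}$, if the incoming message densities from the $K-1$ other neighbors are $A_j$ over $W$ and $A'_j$ over $W'$ with $A'_j\preceq A_j$ for all $j\neq i$, then the outgoing density to the $i$-th neighbor given by (\ref{APP}) again satisfies $(\mathrm{out})'\preceq(\mathrm{out})$. To see this, regard the vector $\boldsymbol{L}_{\sim i}$ of incoming messages as the output of a ``super-channel'' $T$ whose input is a uniformly chosen $\mathcal{C}$-codeword (as in the APP definition) and whose $K-1$ output coordinates are conditionally independent given that codeword, the $j$-th being the BMS channel with L-density $A_j$; let $T'$ be the analogous super-channel for $W'$. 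Since $A'_j\preceq A_j$, there are auxiliary kernels $Q_j$ with $W'_j=Q_j\circ W_j$, and because each $Q_j$ acts only on the $j$-th output coordinate and is independent of the codeword it factors out of the averaging over codewords; hence $T'=\big(\bigotimes_{j\neq i}Q_j\big)\circ T$, so $T'$ is physically degraded with respect to $T$. By the definition of the APP rule, the outgoing message to $v_i$ is the log-likelihood ratio of the effective binary-input channel $x_i\mapsto\boldsymbol{L}_{\sim i}$ obtained from $T$ by placing a uniform prior on $x_i$ and marginalizing over the remaining code bits; composing this effective channel with $\bigotimes_{j\neq i}Q_j$ yields exactly the effective channel attached to $T'$. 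Thus the effective channel over $W'$ is physically degraded with respect to that over $W$, and their L-densities, which are precisely the outgoing APP message densities (symmetric by Lemma \ref{lemma_symmetric_density}), are therefore $\preceq$-ordered, proving the claim.

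Given the claim, the lemma follows by induction on $l$. At the initial iteration the message density equals the channel L-density, so the base case $\widetilde{D}^{(0)}\preceq D^{(0)}$ is precisely the hypothesis that $W'$ is physically degraded with respect to $W$ (and the initial GC/SPC messages are the point mass at $0$, trivially $\preceq$ everything). For the inductive step, assume all iteration-$l$ message densities are $\preceq$-ordered; then the SPC-to-variable densities at iteration $l+1$ are $\preceq$-ordered by preservation under $\boxast$, the GC-to-variable densities by the claim above, and finally the variable-to-check densities and the end-of-iteration decision densities by preservation under $\circledast$, using also $\widetilde{D}^{(0)}\preceq D^{(0)}$ for the channel contribution at each variable node. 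Hence the decision density at iteration $l$ over $W'$ is physically degraded with respect to that over $W$, and since the expected fraction of incorrect messages equals $\mathcal{E}$ evaluated at that density, fact (ii) gives $p=\mathcal{E}(D^{(l)})\le\mathcal{E}(\widetilde{D}^{(l)})=p'$.

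I expect the GC/APP degradation claim to be the main obstacle: for LDPC codes it is subsumed by the $\boxast$ rule, while here one must argue directly that the APP estimator at a generalized constraint — an optimal bit-MAP rule fed by conditionally independent sub-observations — converts coordinate-wise degradation of its inputs into degradation of its scalar, sufficient-statistic output. The points requiring care are that the effective channel $x_i\mapsto\boldsymbol{L}_{\sim i}$ is genuinely a BMS channel, so that the symmetric-L-density/degradation machinery of \cite{richardson2008modern} applies (this rests on the linearity of $\mathcal{C}$ and Lemma \ref{lemma_symmetric_density}), that the degrading kernel $\bigotimes_{j\neq i}Q_j$ may be chosen independently of the code, and that the ``expected fraction of incorrect messages'' is correctly identified with the error functional of the edge-type-averaged message density, so that fact (ii) applies verbatim to the resulting mixture.
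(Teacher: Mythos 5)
Your proposal is correct, but it takes a genuinely different route from the paper's. The paper's proof is a short reduction: it observes that on tree-like neighborhoods the message-passing APP decoder (local MAP processing at variable, SPC, and GC nodes) is exactly symbol-by-symbol ML/MAP decoding of the tree channel, and then monotonicity follows verbatim from the argument of \cite[Theorem 1]{richardson2001capacity} --- one can simulate the degraded observation $\boldsymbol{y}'$ from $\boldsymbol{y}$ through the auxiliary channel $Q$ and invoke the optimality of MAP decoding under $W$, giving $p\leq p'$ in one global step. You instead work at the level of density evolution, running an induction over iterations in the physical-degradation partial order on symmetric L-densities, with preservation of degradation under $\circledast$ and $\boxast$ quoted from \cite{richardson2008modern} and a new local lemma stating that the APP map at a GC node sends coordinatewise-degraded inputs to a degraded output; your proof of that lemma via the effective channel $x_i\mapsto\boldsymbol{L}_{\thicksim i}$ and sufficiency of the LLR is a localized form of the same MAP-optimality-plus-data-processing principle the paper applies globally to the whole tree. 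The paper's route buys brevity and avoids any operator-level bookkeeping; yours buys a self-contained DE-level statement (degradation of the iteration-$l$ message density itself, not just of the error probability), makes explicit exactly which node operations must preserve degradation, and localizes the only GLDPC-specific difficulty to the GC node. Two small points you correctly flagged do need the paper's symmetry machinery: the identification of the DE output density at a GC node (computed under the all-one codeword) with the L-density of the effective channel (which conditions on a uniform codeword with $x_i=+1$) rests on the GC-node symmetry condition and Lemma \ref{lemma_symmetric_density}, and your induction also implicitly uses that degradation is preserved under the $t$/$(1-t)$ mixture of GC and SPC output densities, which holds since the degrading kernels can be chosen per mixture component; neither is a gap, but both deserve a sentence in a full write-up.
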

\begin{proof}
    We only need to prove that the APP decoder on the GLDPC codes with cycle-free Tanner graphs is equivalent to symbol-by-symbol ML decoding, and the rest of the proof is consistent with \cite[Theorem 1]{richardson2001capacity}.

    The Tanner graph of a GLDPC code is a factor graph. In situations where this factor graph forms a tree structure, the marginalization problem, as computed by the factor graph, can be systematically broken down into progressively smaller tasks, following the tree's structure. These tasks correspond to the computations carried out on variable nodes and constraint nodes within the graph. Since the APP decoder conducts local ML decoding at variable nodes, SPC nodes, and GC nodes, when the decoding is conducted under tree-like neighborhoods, the APP decoding algorithm effectively transforms into a symbol-by-symbol ML decoding algorithm.
\end{proof}

\section{Reduced Gap to Capacity}
In this section, we will use density evolution to analyze the asymptotic performance of GLDPC codes over the BMS channels. However, the complexity of the APP decoding makes the direct computation of message distributions in density evolution challenging and computationally intensive. To address this difficulty, we identify a class of error-correcting block codes in which the propagation of messages from each connected variable node follows a uniform and consistent way when they are used as subcodes on GC nodes. Consequently, the performance analysis and practical decoding of GLDPC codes can be significantly simplified. We refer to such subcodes as message-invariant subcodes. 

We show the results of analysis from density evolution separately on the BEC and BI-AWGN channels. By employing two message-invariant subcodes as examples, we investigated the impact of varying proportions of GC nodes on the threshold. In scenarios over the BI-AWGN channel, we propose a Gaussian mixture approximation algorithm for rapid threshold computation. As an improvement to the Gaussian approximation method for LDPC codes \cite{chung2001analysis}, the Gaussian mixture approximation significantly reduces the loss of accuracy introduced by Gaussian approximation while retaining its low complexity characteristics.

\subsection{Message-invariant Subcodes}
For the APP decoder, the equation (\ref{APP}) for passing information to each connected variable node may not be identical in form, which can introduce extra complexities in the performance analysis and practical decoding operations of GLDPC codes. Subcodes that maintain consistency in the form of passing information from GC nodes to their adjacent variable nodes can simplify the inconvenience caused by APP decoding on the GC nodes. We refer to such subcodes as message-invariant subcodes, and provide their definition formally below.

\quad\par

\begin{definition}[message-invariant subcode]
    A subcode $\mathcal{C}$ is a message-invariant subcode if, when utilizing the APP decoder with $\mathcal{C}$ as the subcode on the GC node, for any variable node $v_i$ connected to $\mathcal{C}$, there exists a permutation $\pi_i$. Applying $\pi_i$ to the variables in the message-passing formula from $\mathcal{C}$ to $v_1$ results in the message-passing formula from $\mathcal{C}$ to $v_i$.
\end{definition}

\quad\par

It can be readily verified that, both the (6,3)-shortened Hamming code and the (7,4)-Hamming code can be classified as message-invariant subcodes. These two codes were all used in \cite{liu2019probabilistic} to show that appropriately proportioned GC nodes can reduce the gap to the channel capacity between the based LDPC codes on the BEC,  We denote the (6,3)-shortened Hamming code as the $\mathcal{C}_1$, and its parity-check matrix $H_1$ is 
\begin{equation}
    H_1 = 
        \begin{blockarray}{ccccccc}
        v_1 &v_2&v_3&v_4&v_5&v_6\\
        \begin{block}{(cccccc)c}
            1&0&0&1&1&0&c_1\\
            0&1&0&1&0&1&c_2\\
            0&0&1&0&1&1&c_3\\
        \end{block}
        \end{blockarray}.
\end{equation}
Denote the (7,4)-shortened Hamming code as the $\mathcal{C}_2$, and its parity-check matrix $H_2$ is 
\begin{equation}
    H_2 = 
        \begin{blockarray}{cccccccc}
        v_1 &v_2&v_3&v_4&v_5&v_6&v_7\\
        \begin{block}{(ccccccc)c}
            0&1&1&1&1&0&0&c_1\\
            1&0&1&1&0&1&0&c_2\\
            1&1&0&1&0&0&1&c_3\\
        \end{block}
        \end{blockarray}.
\end{equation}

In the following sections, we will use $\mathcal{C}_1$ and $\mathcal{C}_2$ as subcodes on the GC nodes for example to conduct density evolution analysis for the GLDPC codes. For more details about performing APP decoding on $\mathcal{C}_1$ and $\mathcal{C}_2$, as well as further analysis and applications of message-invariant subcodes, please refer to Appendix A.

\subsection{Density Evolution on the BEC}

In this subsection, we first analyze the asymptotic performance of $(\mathcal{C}, J, K, t)$ GLDPC ensemble using density evolution on the BEC as an example, where the channel information is erased with probability $\epsilon$. Let $\epsilon_l$ denote the probability of the message sent from the variable nodes in the $l$-th iteration is erased, where $l\geq 0$. According to the definition, the initial V2C message is equivalent to the message received from the channel, which has a probability of $\epsilon$ being erased. Therefore, we have $\epsilon_0 = \epsilon$. Assuming that $\epsilon_l$ is known, we proceed to the erasure probability of C2V messages in the $(l+1)$-th iteration. For SPC nodes, we use $e_{S}^{(l+1)}$ to represent the probability that the message sent by an SPC node in the $(l+1)$-th iteration is erased. The calculation of $e_{S}^{(l+1)}$ aligns with the case on LDPC codes \cite{richardson2008modern}, which is shown that $e_{S}^{(l+1)} = 1-(1-\epsilon_l)^{K-1}$.

As for GC nodes, the probability that the outgoing message is erased needs to be analyzed based on the corresponding subcode. By employing the APP decoder and considering the conditions that the received messages should satisfy to result in erased output messages, we can calculate the probability that the transmitted messages are erased. For message-invariant subcodes, the message-passing formulas to each connected variable node are structurally identical. Therefore, the probabilities that outgoing messages sent to each connected variable node are erased are equal. Let $e_{\mathcal{C}}^{(l+1)}$ denote the probability that the message sent by a GC node with subcode $\mathcal{C}$ in the $(l+1)$-th iteration is erased. In the case where the subcode is $\mathcal{C}_1$, we have 
\begin{equation}
    e_{\mathcal{C}_1}^{(l+1)} = 1-2\epsilon_l^3(1-\epsilon_l)^2-8\epsilon_l^2(1-\epsilon_l)^3-5\epsilon_l(1-\epsilon_l)^4-(1-\epsilon_l)^5,
\end{equation}
while in the case where the subcode is $\mathcal{C}_2$, we have
\begin{equation}
    e_{\mathcal{C}_2}^{(l+1)} = 1 - 4\epsilon_l^3(1-\epsilon_l)^3-12\epsilon_l^2(1-\epsilon_l)^4-6\epsilon_l(1-\epsilon_l)^5-(1-\epsilon_l)^6.
\end{equation}

Given that the proportion of GC nodes is $t$, and the degree of GC nodes matches that of SPC nodes, the probability that an edge uniformly randomly taken from the Tanner graph of the GLDPC ensemble connects to a GC node is $t$, while the probability of it connecting to an SPC node is $1-t$. Averaging over this probability, we obtain the probability that a message received by a variable node in the $(l+1)$-th iteration is erased, expressed as $te_{\mathcal{C}_1}^{(l+1)}+(1-t)e_{S}^{(l+1)}$. Due to the fact that the message sent by a variable node in the $(l+1)$-th iteration is erased only if the channel message and all messages from SPC and GC nodes are erased, we obtain 
\begin{equation}
    \epsilon_{l+1} = \epsilon_0(te_{\mathcal{C}_1}^{(l+1)}+(1-t)e_{S}^{(l+1)})^{J-1}.
\end{equation}
We state this in the following theorem.

\quad\par

\begin{theorem}
    For a BEC channel and $(\mathcal{C}, J, K, t)$ GLDPC ensemble where $\mathcal{C}$ is a message-invariant subcode, let $\epsilon_l$ denote the expected fraction of erased messages passed in the $l$-th iteration under the APP decoder, $l\geq 0$, where $\epsilon_0$ is the probability that the message received from the channel is erased. Then under the independence assumption, the iterative update equation of $\epsilon_l$ is given by
    \begin{equation}
    \epsilon_{l+1} = \epsilon_0(te_{\mathcal{C}}^{(l+1)}+(1-t)e_{S}^{(l+1)})^{J-1}.
    \label{bec_c1}
    \end{equation}
    
\end{theorem}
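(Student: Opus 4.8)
The plan is to assemble the update equation from three local computations — at the variable nodes, at the SPC nodes, and at the GC nodes — exactly as in the derivation preceding the statement, with the concentration property established in Section III supplying the independence assumption that justifies treating the decoding neighbourhoods as tree-like and the incoming messages to any node as mutually independent. The first observation I would record is that over the BEC every message exchanged under the APP decoder is either a perfectly reliable hard value or an erasure, and never an incorrect value: the channel LLR is $\pm\infty$ with probability $1-\epsilon_0$ and $0$ with probability $\epsilon_0$, and the variable-node map, the SPC map, and the APP rule (\ref{APP}) all preserve this ``known-or-erased'' dichotomy. Hence the density of any message in iteration $l$ is fully described by a single number, its erasure probability; for the V2C messages we call this $\epsilon_l$, with $\epsilon_0$ equal to the channel erasure probability by the definition of the initial message map.

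Next I would handle the constraint side. At an SPC node the outgoing C2V message on a given edge is erased precisely when at least one of the other $K-1$ incoming V2C messages is erased; under the independence assumption these are i.i.d.\ with erasure probability $\epsilon_l$, giving $e_S^{(l+1)} = 1-(1-\epsilon_l)^{K-1}$ as in \cite{richardson2008modern}. At a GC node with subcode $\mathcal{C}$, the APP rule outputs an erasure to $v_i$ exactly when the unerased incoming messages fail to pin down $c_i$ through the constraints of $\mathcal{C}$; this event depends only on which subset of the incoming positions are erased, so its probability is a polynomial in $\epsilon_l$. The key point here is that $\mathcal{C}$ is message-invariant: the message-passing formula to any connected variable node is the formula to $v_1$ with its arguments permuted, and since on a tree-like neighbourhood the arguments are i.i.d., the outgoing erasure probability is the same on every GC edge. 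This makes the single quantity $e_{\mathcal{C}}^{(l+1)}$ well defined (equal to the displayed polynomials when $\mathcal{C}=\mathcal{C}_1$ or $\mathcal{C}_2$).

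Then I would average over the edge type and combine at the variable node. Since each of the $m=nJ/K$ constraints has degree $K$ and $tm$ of them are GC nodes, the uniform random permutation $\pi$ sends $tmK = tnJ$ of the $nJ$ edges to GC sockets and the remaining $(1-t)nJ$ to SPC sockets; hence a uniformly chosen edge attaches to a GC node with probability $t$ and to an SPC node with probability $1-t$, so the erasure probability of a single incoming C2V message at a variable node in iteration $l+1$ is $t\,e_{\mathcal{C}}^{(l+1)}+(1-t)\,e_S^{(l+1)}$. A V2C message emitted by a variable node in iteration $l+1$ is erased if and only if its channel observation is erased (probability $\epsilon_0$) and all $J-1$ other incoming C2V messages are erased; by independence of these $J$ events on the tree-like neighbourhood, multiplying the probabilities yields $\epsilon_{l+1}=\epsilon_0\big(t\,e_{\mathcal{C}}^{(l+1)}+(1-t)\,e_S^{(l+1)}\big)^{J-1}$, which is (\ref{bec_c1}).

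The step I expect to require the most care is the GC-node analysis — verifying that under the APP decoder on the BEC the outgoing message to $v_i$ is an erasure exactly when the erasure pattern on the remaining $K-1$ positions is not correctable by $\mathcal{C}$ for coordinate $i$, and that the message-invariance of $\mathcal{C}$ forces this probability to be common to all $K$ output edges, so that $e_{\mathcal{C}}^{(l+1)}$ is unambiguous. The SPC-node step, the edge-type averaging, and the final variable-node product are then routine independence bookkeeping, legitimate precisely because of the concentration/cycle-free analysis carried over to GLDPC ensembles in Section III.
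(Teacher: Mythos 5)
Your proposal is correct and follows essentially the same route as the paper's derivation: compute $e_S^{(l+1)}$ as in the LDPC case, use message-invariance of $\mathcal{C}$ to define a single per-edge GC erasure probability $e_{\mathcal{C}}^{(l+1)}$, average over the edge type with weight $t$ (valid since GC and SPC nodes share degree $K$), and multiply the $J-1$ independent C2V erasure events with the channel erasure at the variable node. No gaps to report.
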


\quad\par

Using (\ref{bec_c1}), we can determine the threshold $\epsilon^{\ast}$ which is the maximum value of $\epsilon_0$ which can ensure that $\epsilon_l$ will converge to zero with an increasing number of iterations.

\begin{figure}[!t]
\centering
\includegraphics[width=3in]{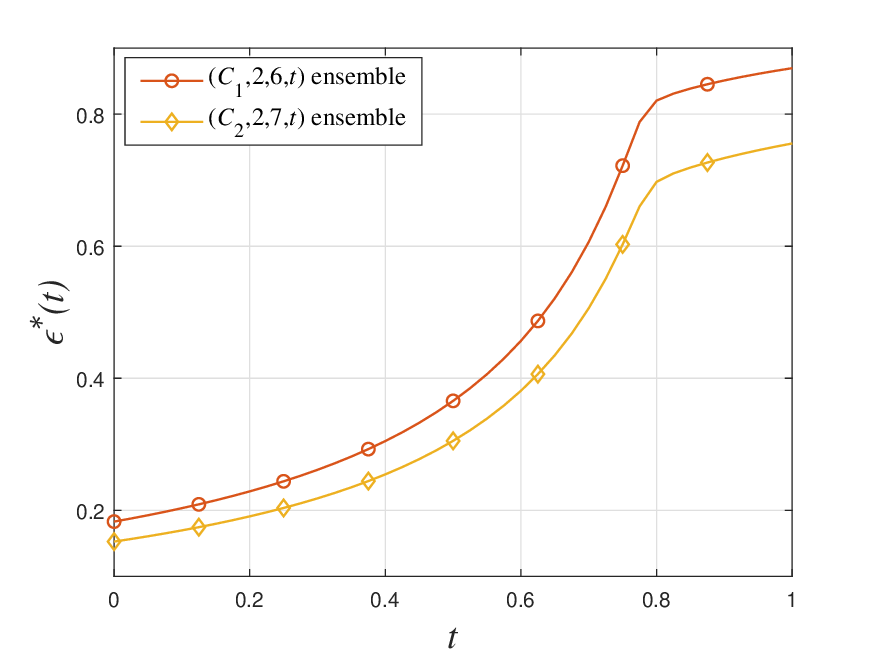}
\caption{The thresholds $\epsilon^{\ast}$ in the BEC as a function of $t$ for both the $(\mathcal{C}_1, 2, 6, t)$ and $(\mathcal{C}_2, 2, 7, t)$ GLDPC ensembles.}
\label{fig_bec_t_ep}
\end{figure}

Similar to \cite{liu2019probabilistic}, we evaluate the asymptotic performance of the GLDPC ensemble as we vary the fraction $t$ of GC nodes in GLDPC. Fig. \ref{fig_bec_t_ep} illustrates the thresholds $\epsilon^{\ast}$ as a function of $t$ for both the $(\mathcal{C}_1,2,6,t)$ and $(\mathcal{C}_2,2,7,t)$ GLDPC ensembles. Note that $\epsilon^{\ast}(t)$ is a continuous, strictly increasing function with respect to $t$. For $t=0$, $\epsilon^{\ast}(0)$ is equal to the threshold of the base LDPC ensemble. Denote the inverse of this function by $t(\epsilon^{\ast})$, which is the minimum fraction of GC nodes in the graph required to achieve the threshold at least $\epsilon^{\ast}$. In a similar way, using (1), we can get the functional relationship between the design rate of GLDPC ensembles and $t$. Therefore, by considering $t$ as a parameter, we can derive the functional relationship between the design rate and threshold, and the functional relationship between the gap to capacity and threshold.

\begin{figure}[!t]
\centering
\subfigure[]{
\includegraphics[width=3in]{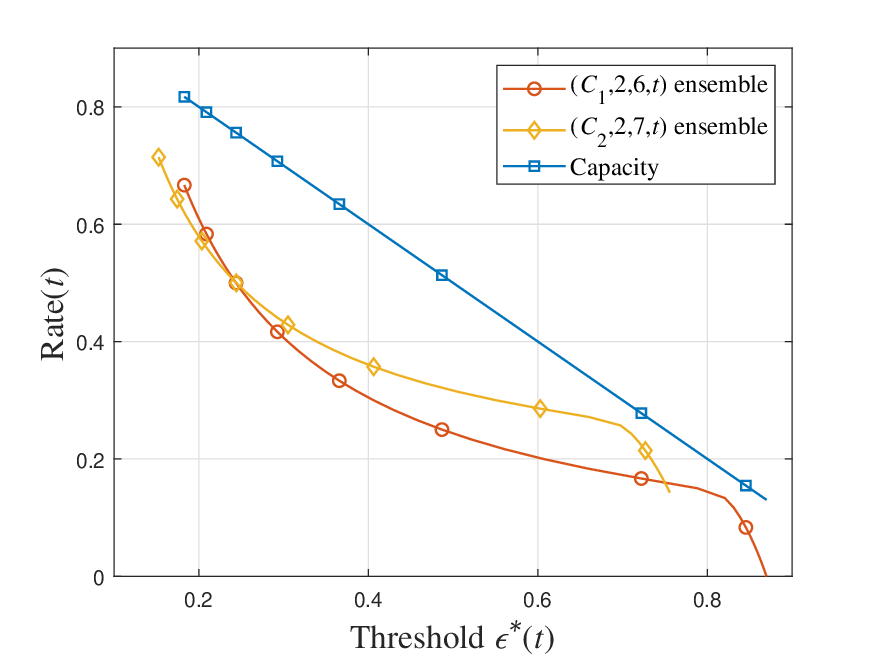} 
}
\subfigure[]{
\includegraphics[width=3in]{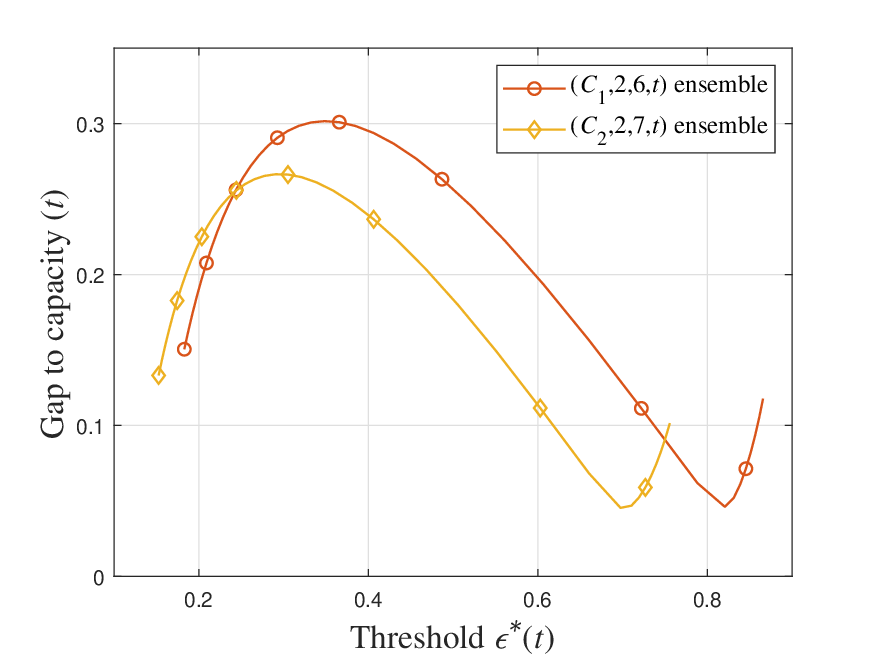} 
}
\DeclareGraphicsExtensions.
\caption{In (a), we show the design rates of $(\mathcal{C}_1,2,6,t)$ ensemble and  $(\mathcal{C}_2,2,7,t)$ ensemble as functions of the threshold under the APP decoder on the BEC, and compare these rates with the channel capacity. In (b), we show the gaps to the channel capacity of these ensembles as functions of the threshold.}
\label{fig_bec_gap}
\end{figure}

In Fig. \ref{fig_bec_gap} (a), we show the design rate of $(\mathcal{C}_1,2,6,t)$ ensemble and  $(\mathcal{C}_2,2,7,t)$ ensemble as a function of the threshold under the APP decoder on the BEC. These code rates are compared against the channel capacity. In Fig. \ref{fig_bec_gap} (b), we show their gap to channel capacity for both ensembles as a function of the threshold. It can be observed that in cases where the proportion of GC nodes $t$ is relatively small, the addition of GC nodes increases the gap to capacity for the GLDPC ensemble. The reason for this is that although the increase in GC nodes can improve performance, it results in a loss of code rate, which overall makes the gap to capacity larger. However, when the proportion of GC nodes becomes significantly higher, the performance improvement brought by GC nodes dominates, leading to a smaller gap to capacity despite the loss in code rate. When the proportion of GC nodes $t$ is greater than 0.73, the gaps to capacity on both the $(\mathcal{C}_1,2,6,t)$ and $(\mathcal{C}_2,2,7,t)$ ensembles will be smaller than their gaps between the base LDPC codes to capacity. Hence, by selecting an appropriate proportion $t$ of GC nodes, it is possible to reduce the gap to capacity on the BEC. This phenomenon is consistent with the observation in \cite{liu2019probabilistic}.

\subsection{Density Evolution on the BI-AWGN Channels}

In BMS channels, the density evolution of GLDPC codes can be analyzed similarly to that on the BEC.

\quad\par

\begin{theorem}
\label{th_awgn}
    For a given BMS channel and $(\mathcal{C}, J, K, t)$ GLDPC ensemble where $\mathcal{C}$ is a message-invariant subcode, let $P_0$ denote the initial message density of log-likelihood ratios, assuming that the all-one codeword was transmitted, and let $P_l$ denote the density of the messages emitted by the variable nodes in the $l$-th iteration under the APP decoder, $l\geq 0$. Then under the independence assumption, the iterative update equation of $P_l$ is given by
    \begin{equation}
    P_{l+1} = P_0\circledast(t\Phi_G^{\mathcal{C}}(P_l)+(1-t)\Phi_S^{K}(P_l))^{\circledast(J-1)},
    \label{eq_de}
    \end{equation}
    where $\Phi_G^{\mathcal{C}}(P_l)$ is the density of the message passed from the GC node with subcode $\mathcal{C}$ at the $(l+1)$-th iteration and $\Phi_S^{K}(P_l)$ is the density of the message passed from the SPC node of degree $K$ at the $(l+1)$-th iteration.  
\end{theorem}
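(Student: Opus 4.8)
The plan is to follow, step by step, the density-evolution derivation of Richardson and Urbanke for LDPC codes \cite{richardson2001capacity}, inserting the GC-node processing where appropriate and invoking the structural results of Section III to legitimize the two standing assumptions. First, by the concentration argument recalled at the start of Section III (message-passing decoding is a local computation, so individual instances concentrate around the cycle-free ensemble average), it suffices to compute message densities on a tree-like neighborhood of depth $l+1$. By the conditional-independence lemma together with the fact, established in Lemma on the APP decoder, that the APP decoder meets the extended symmetry conditions, we may assume the all-one (modulated) codeword was transmitted, so that $P_0$ is the channel L-density and every message density is symmetric in the sense of Lemma \ref{lemma_symmetric_density}. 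On the tree, the messages flowing into any node along distinct edges are statistically independent, which is precisely the hypothesis that makes the convolutional form of the update valid.

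Next I would treat the variable-node step. A variable node has degree $J$, and by (\ref{V2C message update}) the message it emits in iteration $l+1$ is the sum of its channel LLR and the $J-1$ messages it received in iteration $l$ from its other neighbors; on the tree these $J-1$ inputs are independent, each with a common density, and the channel LLR (density $P_0$) is independent of them, so the emitted density is $P_0\circledast(\text{input density})^{\circledast(J-1)}$. To identify the input density, note that in the configuration model of the $(\mathcal{C},J,K,t)$ ensemble every constraint node has degree $K$, so a uniformly chosen edge is incident to a GC node with probability $t$ and to an SPC node with probability $1-t$, independently for the $J-1$ edges in question. Averaging over the edge type, the ensemble-averaged density of a single incoming message is $t\Phi_G^{\mathcal{C}}(P_l)+(1-t)\Phi_S^{K}(P_l)$; since $\circledast$ is multilinear and associative and the edge types are independent, the averaged density of the sum of the $J-1$ incoming messages is $\big(t\Phi_G^{\mathcal{C}}(P_l)+(1-t)\Phi_S^{K}(P_l)\big)^{\circledast(J-1)}$, which yields (\ref{eq_de}).

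It remains to pin down the two constraint-node densities. For an SPC node of degree $K$, the outgoing message is the $\boxast$-combination of the $K-1$ independent incoming variable-node messages, each of density $P_l$, so $\Phi_S^{K}(P_l)=P_l^{\boxast(K-1)}$, exactly as in the LDPC analysis. For a GC node with subcode $\mathcal{C}$, the outgoing message toward its $i$-th neighbor is the APP functional (\ref{APP}) of the $K-1$ messages from its other neighbors, which on the tree are i.i.d.\ with density $P_l$. The role of the message-invariant hypothesis is to guarantee that the resulting outgoing density does not depend on $i$: by definition of a message-invariant subcode there is a permutation $\pi_i$ carrying the $\mathcal{C}\to v_1$ message map to the $\mathcal{C}\to v_i$ message map, and since the $K-1$ inputs are exchangeable (i.i.d.\ $P_l$) this permutation leaves the induced output density unchanged. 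Hence a single well-defined density $\Phi_G^{\mathcal{C}}(P_l)$ describes every message emitted by a GC node, and the derivation closes.

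I expect the main obstacle to be exactly this last point: making rigorous that a GC node contributes one common density $\Phi_G^{\mathcal{C}}(P_l)$ — i.e., that the APP message map, which a priori has a different algebraic form at each of the $K$ sockets, induces the same push-forward when its $K-1$ inputs are i.i.d.\ with density $P_l$ — and, relatedly, that the interchange of ensemble averaging over edge types with the $\circledast$ convolutions is legitimate. Both reduce to exchangeability of the i.i.d.\ tree inputs and multilinearity of $\circledast$ and $\boxast$; they are the only places where anything beyond a verbatim transcription of the LDPC argument is needed, since the symmetry and concentration inputs are already supplied by Section III.
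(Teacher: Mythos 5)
Your proposal is correct and follows essentially the same route as the paper, which derives the update informally for the BEC (edge is incident to a GC node with probability $t$, average the two constraint-node densities, combine the $J-1$ independent inputs with the channel message at the variable node, and use message-invariance so that a GC node contributes a single well-defined outgoing density) and then states Theorem \ref{th_awgn} as the density-level analogue for general BMS channels. Your write-up simply makes explicit, at the level of L-densities, the same concentration/symmetry/tree-independence ingredients from Section III and the same mixture-then-convolve step that the paper uses.
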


\quad\par

\subsubsection{Gaussian Approximation}
Because the calculation of the messages sent by GC nodes is relatively complex, directly computing $\Phi_G^{\mathcal{C}}(P_l)$ for more intricate channels like the BI-AWGN channels becomes challenging. To simplify the analysis of the density evolution under BI-AWGN channels, we employ the Gaussian approximation for message densities, similar to the approach in \cite{chung2001analysis}. The densities of the messages emitted from variable nodes and SPC nodes can be approximated as Gaussians for the reasons of independent assumption, central limit theorem, and empirical results \cite{chung2001analysis}. For the densities of the messages emitted from GC nodes, experiments show that they can also be well approximated by Gaussian distributions. In Fig. \ref{fig_7}, we show the densities at the GC nodes with subcode $\mathcal{C}_1$ and $\mathcal{C}_2$ respectively, obtained through Monte Carlo methods, where the input messages of the GC nodes follow a Gaussian distribution with a mean of 3 and a variance of 6. It can be seen that the densities at the GC nodes can be well approximated by Gaussian distributions.

\begin{figure}[!t]
\centering
\includegraphics[width=3in]{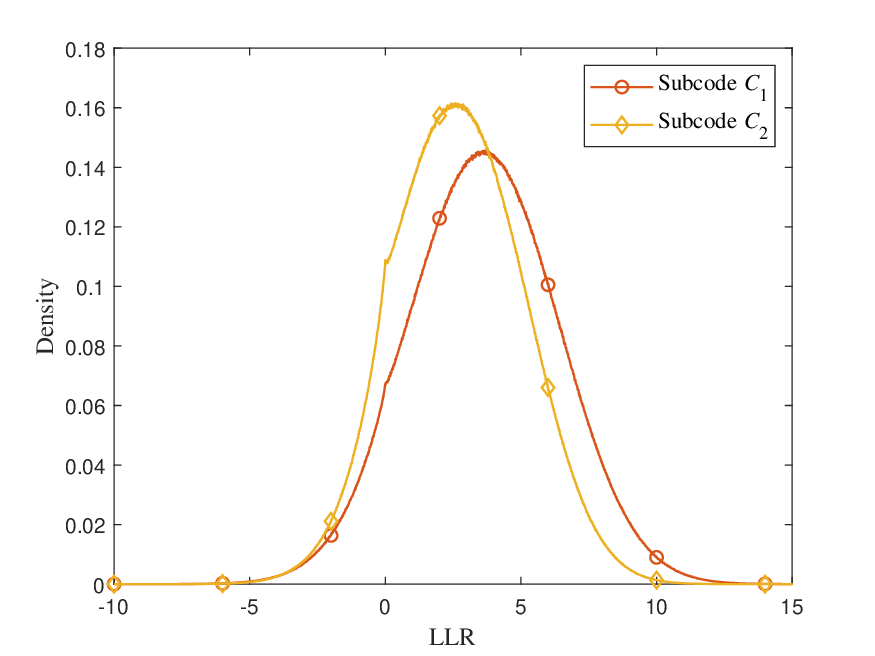}
\caption{The densities at the GC nodes corresponding to subcode $\mathcal{C}_1$ and $\mathcal{C}_2$, respectively, which were obtained through Monte Carlo methods. The input messages of the GC nodes follow a Gaussian distribution with a mean of 3 and a variance of 6. It can be observed that the densities at the GC nodes can be well approximated by Gaussian distributions.}
\label{fig_7}
\end{figure}

Since the message densities sent by variable nodes exhibit symmetry under the APP decoder for GLDPC codes, as established in Lemma \ref{lemma_symmetric_density}, it follows that the distribution of messages emitted by variable nodes with a mean of $m$ has a variance of $2m$. Consequently, it is sufficient to document the mean of the messages to determine the extire distribution, as discussed in \cite{chung2001analysis}.

Denote the mean of density at variable nodes in the $l$-th iteration by $m_{V}^{(l)}$, $l\geq 0$. On the constraint node, we denote the functional relationship between the mean of the output density and the mean of the input density as $\phi_S(m_V)$ and $\phi_G(m_V)$, respectively, for SPC nodes and GC nodes, where the distributions are approximated as Gaussian distributions. Following \cite{chung2001analysis},
\begin{equation}
    \phi_S(m_V) = \phi^{-1}\Bigg(1-\Big[1-\phi(m_V)\Big]^{K-1}\Bigg),
    \label{Gaussian_old_equation}
\end{equation}
where $\phi(m_V)$ is given in (\ref{eq_ga1}). We obtain $\phi_G(m_V)$ using the Monte Carlo method. For computational convenience, approximate forms of $\phi(m_V)$ and $\phi_G(m_V)$ can be utilized. Regarding the approximation for $\phi(m_V)$ and $\phi_G(m_V)$ in the cases of subcode $\mathcal{C}_1$ and $\mathcal{C}_2$, please refer to Appendix B.

By averaging $\phi_S(m_V)$ and $\phi_G(m_V)$ with respect to the parameter $t$, as done in a similar manner to \cite{chung2001analysis} for irregular check nodes, we obtain the following expression:
\begin{equation}
    m_{V}^{(l+1)} = m_{V}^{(0)} + \left(J-1\right)\left(t\phi_G\left(m_V^{(l)}\right)+(1-t)\phi_S\left(m_V^{(l)}\right)\right).
    \label{old_GA}
\end{equation}

However, when $t$ is not equal to 0 or 1, the Gaussian approximation performed using (\ref{old_GA}) sometimes exhibits much inaccuracy. For instance, for the $(\mathcal{C}_1,2,6,0.5)$ ensemble, there is a substantial error of 4.91 dB between the threshold obtained from Gaussian approximation and the threshold obtained from density evolution, where the density of the messages emitted from GC nodes is determined using the Monte Carlo method in each iteration. A similar phenomenon also occurs occasionally in the Gaussian approximation of LDPC codes with irregular check node degrees. For example, for LDPC ensembles with degree distributions given by $\lambda(x) = x^2$ and $\rho(x)=0.9x^2+0.1x^4$, there is an error of 2.38 dB between the threshold obtained from Gaussian approximation and the threshold obtained from density evolution.

\subsubsection{Gaussian mixture Approximation}
The error brought by the Gaussian approximation largely arises from its inability to accurately characterize the distribution of messages sent from variable nodes when different constraint nodes are present. In fact, when we approximate the densities at SPC and GC nodes as Gaussian distributions at iteration $l$, where $l\geq 1$, the average density of messages emitted by the variable nodes should follow a Gaussian mixture distribution. Specifically, the messages received by the variable nodes have a probability of $t$ to follow a Gaussian distribution with a mean of $\phi_G^{(l)}$ and a probability of $1-t$ to follow a Gaussian distribution with a mean of $\phi_S^{(l)}$.

Take $J=2$ as an example. According to Theorem \ref{th_awgn}, for $l\geq 2$,
\begin{equation}
\begin{split}
    P_{l} &= P_0\circledast(t\Phi_G^{\mathcal{C}}(P_{l-1})+(1-t)\Phi_S^{K}(P_{l-1}))\\
    &=t \Big(P_0\circledast \Phi_G^{\mathcal{C}}(P_{l-1})\Big)+
    (1-t) \Big(P_0\circledast \Phi_S^{K}(P_{l-1})\Big).
\end{split}
\end{equation}
By approximating $P_0$ with a Gaussian distribution of mean $m_V^{(0)}$, $\Phi_G^{\mathcal{C}}(P_{l-1})$ with a Gaussian distribution of mean $\phi_G^{(l)}$, $\Phi_S^{K}(P_{l-1})$ with a Gaussian distribution of mean $\phi_S^{(l)}$, $P_{l}$ follows a Gaussian mixture approximation, which has a probability of $t$ to follow a Gaussian distribution with a mean of $\phi_G^{(l)}+m_V^{(0)}$, and a probability of $1-t$ to follow a Gaussian distribution with a mean of $\phi_S^{(l)}+m_V^{(0)}$. In (\ref{old_GA}), this Gaussian mixture distribution is approximated by a Gaussian distribution with a mean of $m_V^{(l)} = t\phi_G^{(l)}+(1-t)\phi_S^{(l)} + m_V^{(0)}$. However, when the values of $\phi_G^{(l)}$ and $\phi_S^{(l)}$ differ significantly, which is common for GLDPC codes due to the difference in the error-correcting capabilities of the subcodes corresponding to SPC nodes and GC nodes, this Gaussian mixture distribution can deviate greatly from a Gaussian distribution, and further leads to inaccurate estimation of the distribution at SPC and GC nodes. 

In Fig. \ref{fig_gaussian_mixture}, for $(\mathcal{C}_1,2,6,0.5)$ and $(\mathcal{C}_2,2,7,0.5)$ GLDPC ensembles, we display the densities of messages sent by variable nodes in the 5-th iteration on a BI-AWGN channel with SNR 4, which are obtained through density evolution as in (\ref{eq_de}), where $\Phi_G^C(P_l)$ is obtained using Monte Carlo methods. It can be seen that the densities deviate significantly from Gaussian distributions, but they can be well approximated by Gaussian mixture distributions.

\begin{figure}[!t]
\centering
\includegraphics[width=3in]{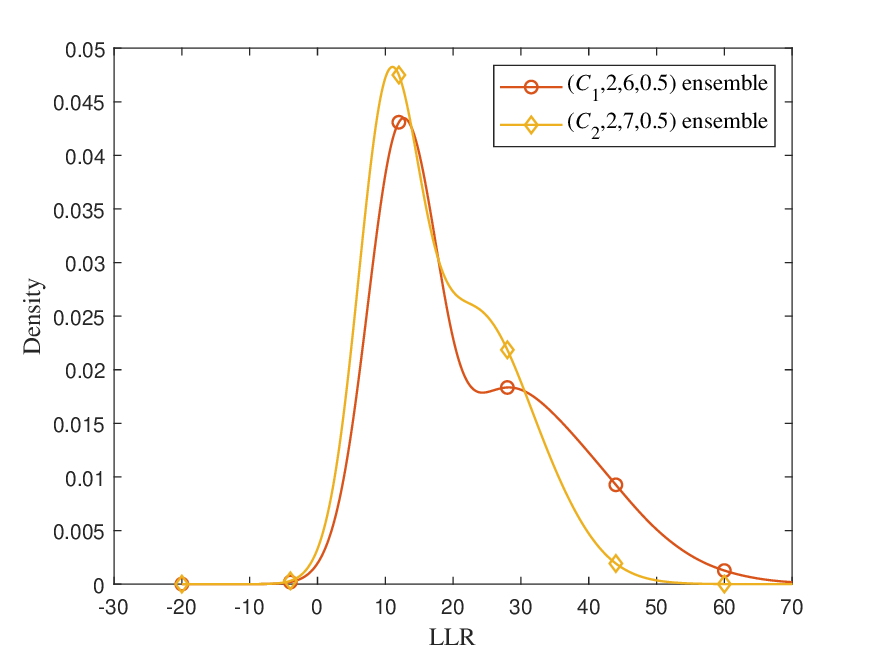}
\caption{The densities of messages sent by variable nodes in the $(\mathcal{C}_1,2,6,0.5)$ ensemble and $(\mathcal{C}_2,2,7,0.5)$ ensemble in the 5-th iteration on the BI-AWGN channel with SNR 4 were obtained through density evolution, as described in equation (\ref{eq_de}). In this process, $\Phi_G^C(P_l)$ was determined using Monte Carlo methods. It is evident that these densities deviate significantly from Gaussian distributions; however, they can be effectively approximated by Gaussian mixture distributions.}
\label{fig_gaussian_mixture}
\end{figure}

By approximating the message distribution at variable nodes as the aforementioned Gaussian mixture distribution, for SPC nodes, we obtain
\begin{equation}
\begin{split}
    \phi_S^{(l+1)} =& \phi^{-1}\Bigg(1-\Big[1-t\Big(\phi[\phi_G^{(l)}+m_V^{(0)}]\Big)\\&-(1-t)\Big(\phi[\phi_S^{(l)}+m_V^{(0)}]\Big)\Big]^{K-1}\Bigg),
\end{split} 
\label{new_S_equation}
\end{equation}
for $l\geq1$.

For a GC node with degree $K$, when it sends messages to one of its neighboring variable nodes, it needs to consider the messages from other $K-1$ neighboring nodes. Each message is drawn from a Gaussian distribution with mean $\phi_G^{(l)}+m_V^{(0)}$ with probability $t$ and from a Gaussian distribution with mean $\phi_S^{(l)}+m_V^{(0)}$ with probability $1-t$. Therefore, with probability $\tbinom{K-1}{\alpha}t^{\alpha}(1-t)^{K-1-\alpha}$, out of the $K-1$ messages, $\alpha$ messages are selected from a Gaussian distribution with mean $\phi_G^{(l)}+m_V^{(0)}$, and $K-1-\alpha$ messages are selected from a Gaussian distribution with mean $\phi_S^{(l)}+m_V^{(0)}$. By averaging over all possible inputs and employing a similar approach as that on SPC nodes, we approximate $\phi_G^{(l+1)}$ as
\begin{equation}
    \phi_G^{(l+1)} = \sum\limits_{\alpha=0}^{K-1}\tbinom{K-1}{\alpha}t^{\alpha}(1-t)^{K-1-\alpha}\phi^{-1}\Bigg(1-F_1^{\alpha}F_2^{K-1-\alpha}\Bigg),
    \label{new_G_equation}
\end{equation}
where 
\begin{equation}
    F_1 = \Bigg(1-\phi\bigg[\phi_G\Big(\phi_G^{(l)}+m_V^{(0)}\Big)\bigg]\Bigg)^{\frac{1}{K-1}}
\end{equation}
and
\begin{equation}
    F_2 = \Bigg(1-\phi\bigg[\phi_G\Big(\phi_S^{(l)}+m_V^{(0)}\Big)\bigg]\Bigg)^{\frac{1}{K-1}}.
\end{equation}

By following the steps outlined above, $\phi_S^{(l+1)}$ and $\phi_G^{(l+1)}$ can be computed iteratively for constraint nodes. Consequently, the error probability of variable node messages in the $(l+1)$-th iteration can be determined.

\begin{table}[!t]
\begin{threeparttable}
\caption{The threshold obtained using different methods for $(\mathcal{C}_1, 2, 6, t)$ GLDPC ensemble and $(\mathcal{C}_2, 2, 7, t)$ GLDPC ensemble}
\centering
\begin{tabular}{|c|c|c|c|c|c|c|}
\hline
$\mathcal{C}$ & $t$ & $\sigma_{Monte}$\tnote{1} & $\sigma_{GA}$\tnote{2}& $\sigma_{GMA}$\tnote{3} & E$_1$[dB]\tnote{4}& E$_2$[dB]\tnote{5}\\
\hline
\hline
$\mathcal{C}_1$ & 0 & 0.5754&0.5857& 0.5857&0.15 &0.15 \\
\hline
$\mathcal{C}_1$ & 0.1 &0.5957&0.6885 &0.6060 &1.40 &0.15\\
\hline
$\mathcal{C}_1$ & 0.3 &0.6539 &0.9461& 0.6641&3.21 &0.13\\
\hline
$\mathcal{C}_1$ & 0.5 & 0.7665&1.3487&0.7732 &4.91 &0.08\\
\hline
$\mathcal{C}_1$ & 0.7 &1.1574&1.8537 &1.1382 &4.10 &0.15\\
\hline
$\mathcal{C}_1$ & 0.9 &2.1478 &2.2346& 2.1605&0.34 &0.05\\
\hline
$\mathcal{C}_1$ & 1 &2.3550 &2.4046 &2.4060&0.18 &0.18\\
\hline
\hline
$\mathcal{C}_2$ & 0 &0.5464 & 0.5556& 0.5556&0.17 &0.14\\
\hline
$\mathcal{C}_2$ & 0.1 &0.5636&0.6377 & 0.5729& 1.05&0.14\\
\hline
$\mathcal{C}_2$ & 0.3 & 0.6116& 0.8444&0.6209 & 2.80&0.13\\
\hline
$\mathcal{C}_2$ & 0.5 &0.7006 & 1.1076& 0.7047& 3.98&0.05\\
\hline
$\mathcal{C}_2$ & 0.7 &0.9627 &1.3142&0.9151 & 2.70&0.44\\
\hline
$\mathcal{C}_2$ & 0.9 & 1.5101 &1.4592& 1.4959 & 0.30&0.08\\
\hline
$\mathcal{C}_2$ & 1 & 1.6106&1.5497 &1.6448&0.18 &0.18\\
\hline
\end{tabular}
\begin{tablenotes}
\footnotesize
	\item[1] $\sigma_{Monte}$ is the threshold computed through density evolution, where the density of the messages emitted from GC nodes is obtained by applying the Monte Carlo method in each iteration.
	\item[2] $\sigma_{GA}$ is the threshold computed through Gaussian approximation as in (\ref{old_GA}).
 \item[3] $\sigma_{GMA}$ is the threshold computed through Gaussian mixture approximation as in (\ref{new_S_equation}) and (\ref{new_G_equation}).
 \item[4]  E$_1$[dB] represents the difference, in dB, between the threshold computed using Gaussian approximation and the threshold computed using density evolution with Monte Carlo.
  \item[5]  E$_2$[dB] represents the difference, in dB, between the threshold computed using Gaussian mixture approximation and the threshold computed using density evolution with Monte Carlo.
			\end{tablenotes}
   \end{threeparttable}
   \label{table_gaussian_mixture}
\end{table}

We compute the thresholds for both the $(\mathcal{C}_1, 2, 6, t)$ and $(\mathcal{C}_2, 2, 7, t)$ GLDPC ensembles on the BI-AWGN channel using three different methods: density evolution as described in (\ref{eq_de}), Gaussian approximation as described in (\ref{old_GA}), and Gaussian mixture approximation following (\ref{new_S_equation}) and (\ref{new_G_equation}). For density evolution, we determined the density of messages from GC nodes by employing the Monte Carlo method in each iteration. The results are presented in TABLE \uppercase\expandafter{\romannumeral1}. It can be noted that for both $\mathcal{C}_1$ and $\mathcal{C}_2$, the Gaussian mixture approximation method can considerably reduce the errors compared to Gaussian approximation across various values of $t$. The error in the Gaussian mixture approximation in Table \uppercase\expandafter{\romannumeral1} can be further reduced by obtaining a more finely accurate estimation for $\phi_G(m_V)$. It is worth noting that the aforementioned Gaussian mixture approximation method can be similarly applied to LDPC codes with irregular check node degrees. In TABLE \uppercase\expandafter{\romannumeral2}, we apply the Gaussian mixture approximation method to LDPC codes ensembles, where the variable node degree is set to 3, and the check nodes have degrees of 3 and 5. The comparison reveals a significant improvement in threshold estimation accuracy over Gaussian approximation. In this specific example, the maximum error between the threshold obtained by Gaussian approximation and density evolution was 2.38 dB, whereas the maximum error between the threshold obtained by Gaussian mixture approximation and density evolution was 0.15 dB.

\begin{table}[!t]
\begin{threeparttable}
\caption{The threshold obtained using different methods for LDPC codes.}
\centering
\begin{tabular}{|c||c|c||c|c|c|c|c|}
\hline
$\lambda_3$\tnote{1} & $\rho_3$\tnote{2}& $\rho_5$& $\sigma_{DE}$\tnote{3}& $\sigma_{GA}$& $\sigma_{GMA}$& E$_1$[dB]\tnote{4}& E$_2$[dB]\tnote{5}\\
\hline
1 & 0 & 1& 1.0059& 0.9983& 0.9983 & 0.06&0.06\\
\hline
1 & 0.1 & 0.9& 1.0645& 1.0480& 1.0509& 0.13 & 0.10\\
\hline
1 & 0.3 & 0.7& 1.2051& 1.1480&1.1840& 0.42 &0.15\\
\hline
1 & 0.5 & 0.5& 1.3926& 1.2570& 1.3691&0.89 &0.15\\
\hline
1 & 0.7 & 0.3& 1.6504& 1.3710&1.6216& 1.61 &0.15\\
\hline
1 & 0.9 & 0.1& 1.9551& 1.4870&1.9324& 2.38 &0.10\\
\hline
\end{tabular}
\begin{tablenotes}
\footnotesize
	\item[1] $\lambda_3$ represents the proportion of edges connected to degree-3 variable nodes in the graph.
	\item[2] $\rho_3$ represents the proportion of edges connected to degree-3 check nodes in the graph.
 \item[3] $\sigma_{DE}$ is the threshold computed through density evolution \cite{richardson2001capacity}.
 \item[4]  E$_1$[dB] represents the difference, in dB, between the threshold computed using Gaussian approximation and the threshold computed using density evolution.
 \item[5]  E$_2$[dB] represents the difference, in dB, between the threshold computed using Gaussian 
mixture approximation and the threshold computed using density evolution.
			\end{tablenotes}
   \end{threeparttable}
\end{table}

Through the aforementioned Gaussian mixture approximation method, we can calculate the threshold of channel parameters for a given GLDPC ensemble while varying the parameter $t$. Figure \ref{fig_9} illustrates these thresholds, denoted as SNR$^\ast$, as a function of $t$ for both the $(\mathcal{C}_1,2,6,t)$ and $(\mathcal{C}_2,2,7,t)$ GLDPC ensembles in BI-AWGN channels. We see that SNR$^{\ast}(t)$ is a continuous, strictly decreasing function of $t$. Denote the inverse of this function by $t($SNR$^{\ast})$, which is the maximum fraction of GC nodes in the graph required to achieve an ensemble threshold at most SNR$^{\ast}$. By employing (1), we can establish a functional connection between the design rates of GLDPC ensembles and the parameter $t$. Consequently, by treating $t$ as a parameter in our analysis, we can derive functional relationships between the design rate and threshold, as well as the functional relationships between the gap to capacity and threshold.

In Fig. \ref{fig_awgn_gap} (a), we plot the design rate of the $(\mathcal{C}_1,2,6,t)$ and $(\mathcal{C}_2,2,7,t)$ ensembles as functions of the threshold under the APP decoder on the BI-AWGN channel, and compare their design rates with the channel capacity. In Fig. \ref{fig_awgn_gap} (b), we display their gaps to the channel capacity. Since SNR$^{\ast}(t)$ is a monotonically decreasing function with $t$ as shown in Fig. \ref{fig_9}, in Fig. \ref{fig_awgn_gap}, as the abscissa increases from left to right, $t$ continuously decreases. The rightmost points on the curves in Fig. \ref{fig_awgn_gap} correspond to scenarios where $t$ equals 0. Consistent with the observations over the BEC, when the proportion of $t$ is relatively small, there is an increase in the gap to the channel capacity due to the loss in code rate compared to the base LDPC code. However, when the proportion of $t$ is larger and appropriately chosen, despite the loss in code rate, the decoding improvements outweigh the rate loss, leading to a smaller gap to the channel capacity in comparison to the base LDPC code. For the $(\mathcal{C}_1,2,6,t)$ ensemble and the $(\mathcal{C}_2,2,7,t)$ ensemble, their minimum gaps to capacity are achieved at $t=0.8$ and $t=0.85$, respectively.

\begin{figure}[!t]
\centering
\includegraphics[width=3in]{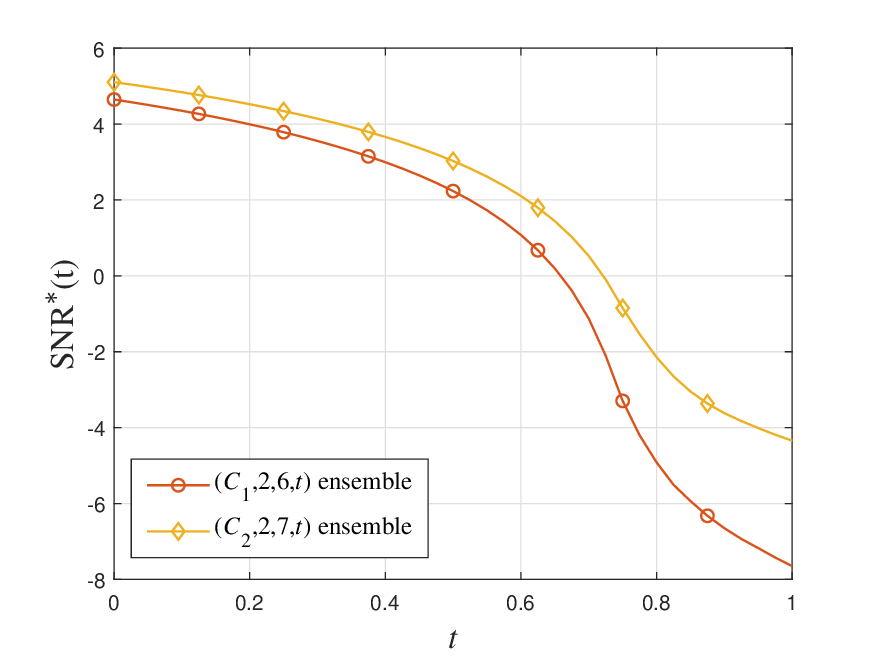}
\caption{The thresholds in the BI-AWGN channel as a function of $t$ for $(\mathcal{C}_1, 2, 6, t)$ and $(\mathcal{C}_2, 2, 7, t)$ GLDPC ensembles.}
\label{fig_9}
\end{figure}

\begin{figure}[!t]
\centering
\subfigure[]{
\includegraphics[width=3in]{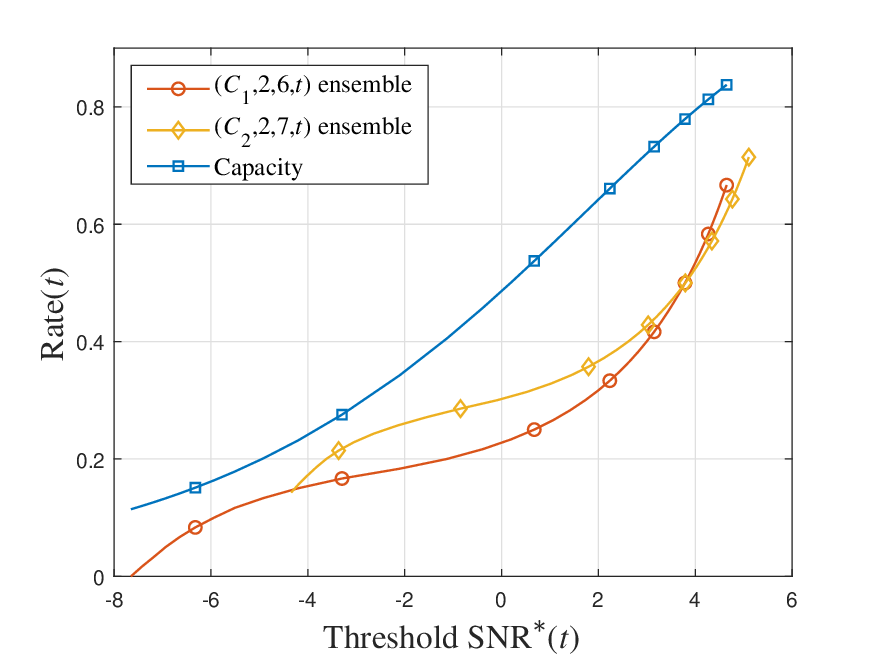} 
}
\subfigure[]{
\includegraphics[width=3in]{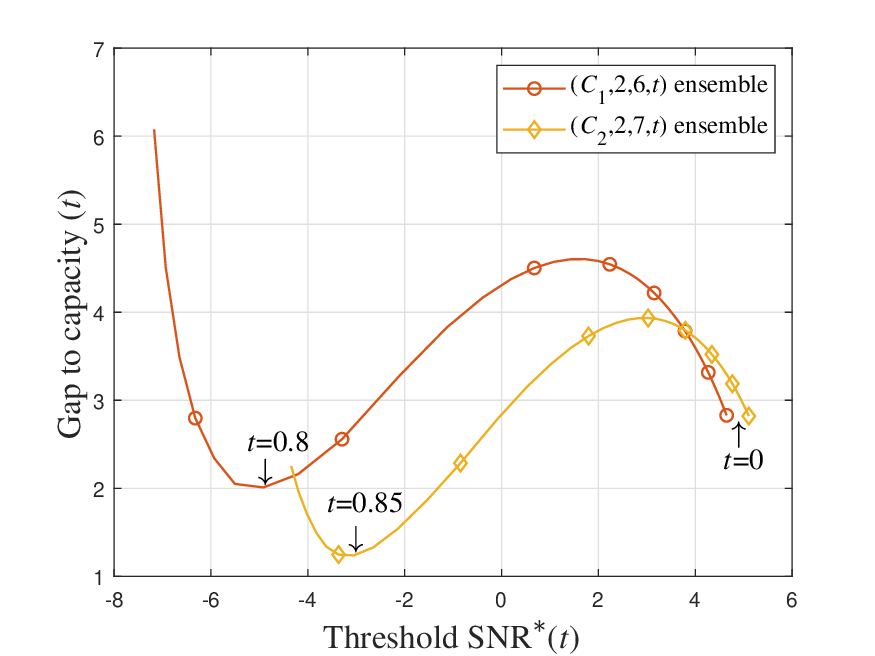} 
}
\DeclareGraphicsExtensions.
\caption{In (a), we plot the design rate of $(\mathcal{C}_1,2,6,t)$ ensemble and  $(\mathcal{C}_2,2,7,t)$ ensemble as a function of the threshold under the APP decoder on the BI-AWGN channel, and compare these rates with the channel capacity. In (b), we show their gaps to the channel capacity.}
\label{fig_awgn_gap}
\end{figure}

\subsection{Simulation Results}
In this section, we construct GLDPC codes with appropriate proportion of GC nodes and compare their performance through simulation experiments with LDPC codes which are at the same design rate as GLDPC codes. 

In Fig. \ref{fig_simulation_1}, we uniformly randomly select a GLDPC code with a code length of 3000 and subcode $\mathcal{C}_1$. The Tanner graph of the selected GLDPC code is free of cycles of length 4 and parallel edges, which has variable nodes of degree 2 and check nodes of degree 6, with GC nodes ratio $t=0.8$. The maximum number of iterations is set to be 20. For comparison purposes, an LDPC code is obtained by treating the parity-check matrix of this GLDPC code as the parity-check matrix of an LDPC code. We randomly permute the edges on this matrix to remove any cycles of length 4, enabling us to perform BP decoding on this LDPC code which has the same code rate as the GLDPC code. It can be observed that compared to this LDPC code with the same design rate, the GLDPC code exhibits a much lower block error rate (BLER), with an improvement of over 1dB.

Similarly, in Fig. \ref{fig_simulation_2}, we uniformly randomly selected a GLDPC code of length 3500 with subcode $\mathcal{C}_2$, whose Tanner graph is free of cycles of girth 4 and parallel edges. The GLDPC code has variable nodes of degree 2 and check nodes of degree 7, with GC nodes ratio $t=0.85$. The maximum number of iterations is set to be 50. The BLER of the LDPC obtained by eliminating the cycles of length 4 in the parity-check matrix of this GLDPC code is shown for comparison. It can be observed that GLDPC codes exhibit a gain of approximately 0.4dB compared to LDPC codes with the same design rate.

\begin{figure}[!t]
\centering
\includegraphics[width=3in]{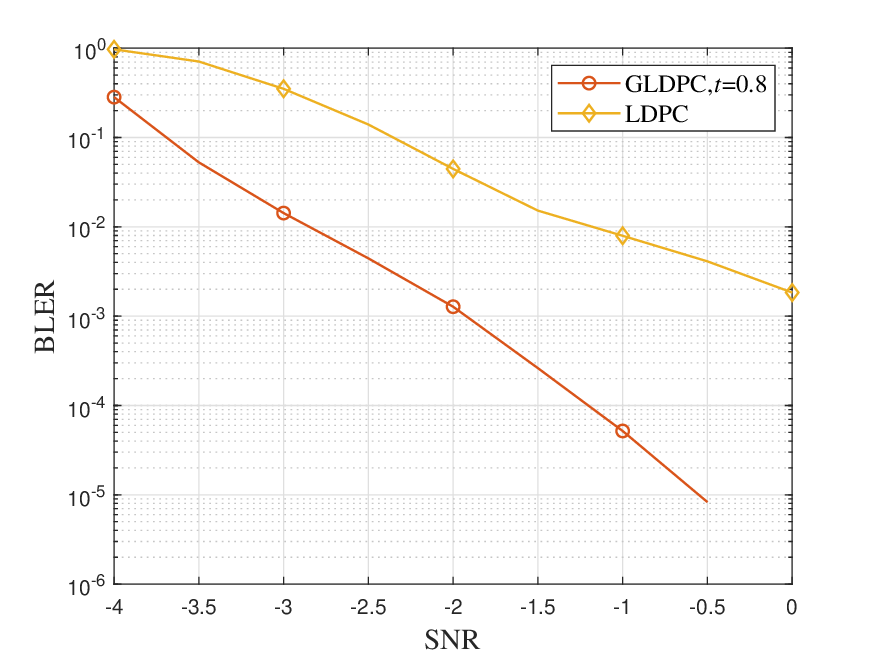}
\caption{The BLER of a randomly chosen (2,6)-GLDPC code which is free of cycles of length 4 in its Tanner graph. The code has length 3000 and subcode $\mathcal{C}_1$, where $t$ is set to be 0.8. The maximum number of iterations is set to be 20. Together is the BLER of the LDPC code which is obtained by treating the parity-check matrix of this GLDPC code as the parity-check matrix of an LDPC code. Cycles of length 4 in the Tanner graph of this LDPC code are eliminated by randomly permuting the edges in the graph. It is worth noting that the GLDPC code and LDPC code used for comparison have the same design rate.}
\label{fig_simulation_1}
\end{figure}

\begin{figure}[!t]
\centering
\includegraphics[width=3in]{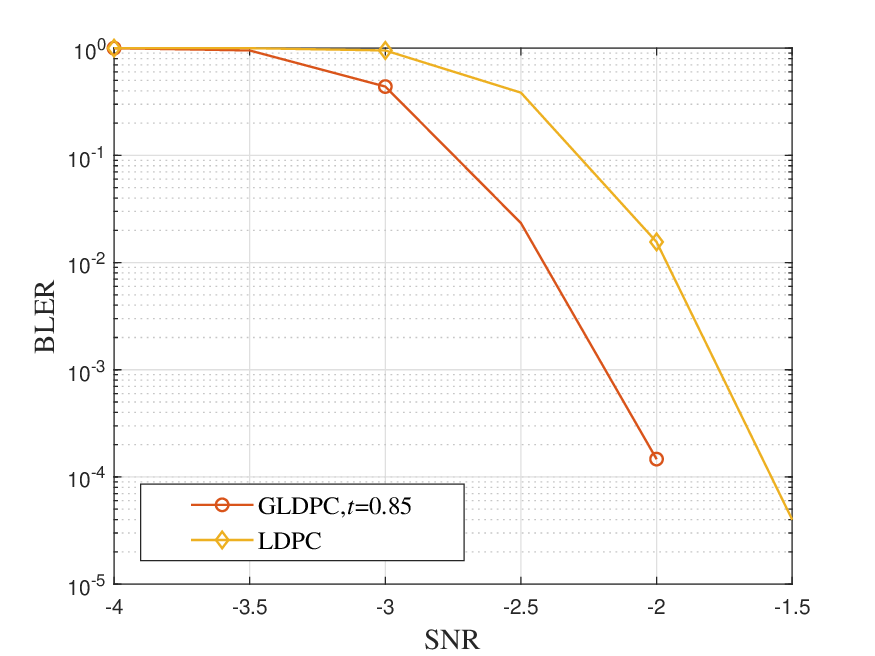}
\caption{The BLER of a randomly chosen (2,7)-GLDPC code which is free of cycles of length 4 in its Tanner graph. The code has length 3500 and subcode $\mathcal{C}_1$, where $t$ is set to be 0.85. The maximum number of iterations is set to be 50. Together is the BLER of the LDPC code which is obtained by treating the parity check matrix of this GLDPC code as the parity-check matrix of an LDPC code. Cycles of length 4 in the Tanner graph of this LDPC code are eliminated by randomly permuting the edges in the graph. It is worth noting that the GLDPC code and LDPC code used for comparison have the same design rate.}
\label{fig_simulation_2}
\end{figure}

\section{Conclusion}
In this study, we analyze the performance of GLDPC codes under the APP decoder by extending the methods of density evolution from LDPC codes to GLDPC codes. Similar to density evolution on LDPC codes, the concentration condition, symmetry condition, and monotonicity condition in GLDPC codes under the APP decoder can be established to provide theoretical guarantees for density evolution algorithms.

In particular, we define a class of message-invariant subcodes, which can significantly reduce the complexity of theoretical analysis and practical decoding of GLDPC codes under the APP decoder. Using two message-invariant subcodes as examples, we provide density evolution analysis on GLPDC codes for the BMS channels. For both the BEC and BI-AWGN channels, we illustrate that by appropriately selecting the fraction of GC nodes, GLDPC codes can achieve a reduced gap to capacity compared to the base LDPC code for both channels.

Among them, on the BI-AWGN channel, we propose a Gaussian mixture approximation method as a fast approximation algorithm for density evolution. Compared to the Gaussian approximation method, the Gaussian mixture approximation can significantly reduce the errors introduced by approximation while still having a low complexity similar to Gaussian approximation. Furthermore, this approximation method can also be extended to LDPC codes. 

Looking forward, further research is needed to delve into the utilization of density evolution for selecting appropriate subcodes and constructing GLDPC codes that approach channel capacity while remaining within acceptable decoding complexity. Additionally, more investigation is needed to explore the application of density evolution in analyzing the error floor of GLDPC codes and exploring related aspects.

\begin{appendices}
\section{Message-invariant subcodes}
To show that $\mathcal{C}_1$ is a message-invariant subcode, examine the Tanner graph of $\mathcal{C}_1$ provided in Fig. \ref{fig_2}. It can be observed that $v_1$, $v_2$, and $v_3$ exhibit a symmetrical relationship in the Tanner graph. 
That is, the formulas for passing messages to $v_1$, $v_2$ and $v_3$ differ only in the order of input variables, as in (\ref{subcode1_l1}) and (\ref{subcode1_l2}) as an example, where $l_i$ is the LLR value of $v_i$. 
\begin{equation}
l_1 = \log \frac{e^{l_2+l_3+l_4+l_5+l_6}+e^{l_2+l_3}+e^{l_4+l_5}+e^{l_6}}{e^{l_2+l_4+l_6}+e^{l_3+l_5+l_6}+e^{l_2+l_5}+e^{l_3+l_4}}.\label{subcode1_l1}
\end{equation}

\begin{equation}
l_2 = \log \frac{e^{l_1+l_3+l_4+l_5+l_6}+e^{l_1+l_3}+e^{l_4+l_6}+e^{l_5}}{e^{l_1+l_4+l_5}+e^{l_3+l_5+l_6}+e^{l_1+l_6}+e^{l_3+l_4}}.
\label{subcode1_l2}
\end{equation}

Therefore, the messages sent from GC node to its connected variable nodes $v_1$, $v_2$ and $v_3$ should have the same form. This means that when transmitting messages to $v_2$ and $v_3$, the formulas of passing messages to $v_1$ can be used by changing the order of the input messages correspondingly. For example, when transmitting messages to $v_2$, we can use the formulas of passing messages to $v_1$ by mapping the input $l_1, l_3, l_4, l_5, l_6$ to $l_2, l_3, l_4, l_6, l_5$ in (\ref{subcode1_l1}). Similarly, when GC nodes transmit messages to $v_4$, $v_5$, and $v_6$, they should also have the same form. We provide the formulas for the messages transmitted from GC node to $v_4$ as below.

\begin{figure}[!t]
\centering
\includegraphics[width=2.5in]{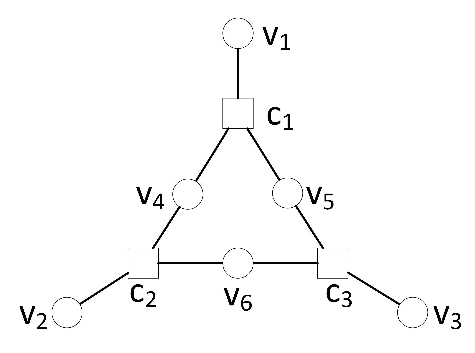}
\caption{The Tanner graph of $\mathcal{C}_1$.}
\label{fig_2}
\end{figure}

\begin{equation}
l_4 = \log \frac{e^{l_1+l_2+l_3+l_5+l_6}+e^{l_1+l_5}+e^{l_2+l_6}+e^{l_3}}{e^{l_1+l_2+l_3}+e^{l_3+l_5+l_6}+e^{l_2+l_5}+e^{l_1+l_6}}.
\end{equation}
It can be observed that the formulas for passing messages to $v_4$ also share the same form as the formulas for passing messages to $v_1$. In fact, we can find a permutation $\pi_4$ such that applying $\pi_4$ to the variables in the message-passing formula for $v_1$ yields the formula for message-passing to $v_4$, where $\pi_4$ is defined as $\pi_4(l_1) = l_4$, $\pi_4(l_2) = l_2$, $\pi_4(l_3) = l_6$, $\pi_4(l_4) = l_1$, $\pi_4(l_5) = l_5$, and $\pi_4(l_6) = l_3$.

Therefore, GC nodes with $\mathcal{C}_1$ as the subcode transmit messages to their neighboring variable nodes in a consistent manner in terms of form, indicating that $\mathcal{C}_1$ is a message-invariant subcode. 

Hence, once $\mathcal{C}_1$ has received all the messages from its connected variable nodes, in order to convey a message to $v_i$, we merely need to apply the appropriate permutation to the input information in the reverse order of $\pi_i$. Subsequently, we can employ the formula designed for transmitting messages to $v_1$ for the message-passing process. This characteristic can simplify the decoding process and analysis. We record the required permutations $\pi_i$ corresponding to $v_i$ for $\mathcal{C}_1$ in TABLE \uppercase\expandafter{\romannumeral3}. 

For the subcode $\mathcal{C}_2$, through similar analysis, we can derive the formula for sending messages to each of its adjacent variables based solely on the formula for sending messages to $v_1$ and the corresponding permutations. We provide the message passing equations and corresponding permutations needed to convert the messages sent to each variable node into messages passed to $v_1$, as shown in the following equation and Table \uppercase\expandafter{\romannumeral4}. 
\begin{equation}
\begin{split}
        l_1 &= \log (e^{l_2+l_3+l_4+l_5+l_6+l_7}+e^{l_2+l_4+l_7}+e^{l_2+l_5+l_6}+e^{l_3+l_4+l_6}\\
        &\quad\quad+e^{l_3+l_5+l_7}+e^{l_2+l_3}+e^{l_4+l_5}+e^{l_6+l_7})\\
        &-\log(e^{l_2+l_3+l_4+l_5}+e^{l_2+l_3+l_6+l_7}+e^{l_4+l_5+l_6+l_7}+e^{l_2+l_4+l_6}\\
        &\quad\quad+e^{l_2+l_5+l_7}+e^{l_3+l_4+l_7}+e^{l_3+l_5+l_6}+1),
\end{split}
\label{subcode2_l1}
\end{equation}

\begin{table}[!t]
\caption{The permutations for $\mathcal{C}_1$\label{tab:table1}}
\centering
\begin{tabular}{|c||c|c|c|c|c|c|}
\hline
$\pi_i$ & $\pi_i(l_1)$& $\pi_i(l_2)$& $\pi_i(l_3)$& $\pi_i(l_4)$& $\pi_i(l_5)$& $\pi_i(l_6)$ \\
\hline
$\pi_1$ & $l_1$ & $l_2$& $l_3$& $l_4$& $l_5$& $l_6$ \\
\hline
$\pi_2$ & $l_2$ & $l_3$& $l_1$& $l_6$& $l_4$& $l_5$ \\
\hline
$\pi_3$ & $l_3$ & $l_1$& $l_2$& $l_5$& $l_6$& $l_4$ \\
\hline
$\pi_4$ & $l_4$ & $l_2$& $l_6$& $l_1$& $l_5$& $l_3$ \\
\hline
$\pi_5$ & $l_5$ & $l_1$& $l_4$& $l_3$& $l_6$& $l_2$ \\
\hline
$\pi_6$ & $l_6$ & $l_3$& $l_5$& $l_2$& $l_4$& $l_1$ \\
\hline
\end{tabular}
\label{C1}
\end{table}

\begin{table}[!t]
\caption{The permutations for $\mathcal{C}_2$\label{tab:table1}}
\centering
\begin{tabular}{|c||c|c|c|c|c|c|c|}
\hline
$\pi_i$ & $\pi_i(l_1)$& $\pi_i(l_2)$& $\pi_i(l_3)$& $\pi_i(l_4)$& $\pi_i(l_5)$& $\pi_i(l_6)$ & $\pi_i(l_7)$\\
\hline
$\pi_1$ & $l_1$ & $l_2$& $l_3$& $l_4$& $l_5$& $l_6$ & $l_7$\\
\hline
$\pi_2$ & $l_2$ & $l_3$& $l_1$& $l_4$& $l_6$& $l_7$ & $l_5$\\
\hline
$\pi_3$ & $l_3$ & $l_1$& $l_2$& $l_4$& $l_7$& $l_5$ & $l_6$\\
\hline
$\pi_4$ & $l_4$ & $l_7$& $l_3$& $l_1$& $l_5$& $l_6$ & $l_2$\\
\hline
$\pi_5$ & $l_5$ & $l_2$& $l_7$& $l_4$& $l_1$& $l_6$ & $l_3$\\
\hline
$\pi_6$ & $l_6$ & $l_2$& $l_4$& $l_3$& $l_5$& $l_1$ & $l_7$\\
\hline
$\pi_7$ & $l_7$ & $l_4$& $l_3$& $l_2$& $l_5$& $l_6$ & $l_1$\\
\hline
\end{tabular}
\label{C2}
\end{table}

Concerning the characteristics of a code that qualifies as a message-invariant subcode, we introduce the following lemma.

\quad\par

\begin{lemma}
The automorphism group Aut($\mathcal{C}$) of code $\mathcal{C}$ is the largest group of $n \times n$ permutation matrices that preserve the codewords of $\mathcal{C}$. A code is termed "transitive" if its automorphism group acts transitively on its codewords. Then, a transitive code $\mathcal{C}$ is a message-invariant subcode.
\end{lemma}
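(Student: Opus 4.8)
The plan is to show that an automorphism of $\mathcal{C}$ witnessing transitivity is exactly the input relabelling $\pi_i$ demanded by the definition of a message-invariant subcode. Since $\mathcal{C}$ is transitive, $\mathrm{Aut}(\mathcal{C})$ acts transitively on the coordinate set $\{1,\dots,K\}$, so for each $i\in\{1,\dots,K\}$ there is a coordinate permutation $\sigma_i\in\mathrm{Aut}(\mathcal{C})$ with $\sigma_i(1)=i$; here $\sigma_i$ acts on vectors by $(\sigma_i\boldsymbol{c}')_j=c'_{\sigma_i^{-1}(j)}$ and satisfies $\sigma_i\mathcal{C}=\mathcal{C}$. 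The first step is to substitute the bijection $\boldsymbol{c}'=\sigma_i\boldsymbol{d}$ of $\mathcal{C}$ onto itself into the APP message map~(\ref{APP}) for the message from the GC node to $v_i$, and to verify that it transforms that map into the one for $v_1$, up to reindexing the inputs.

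Carrying this out: because $(\sigma_i\boldsymbol{d})_i=d_{\sigma_i^{-1}(i)}=d_1$, the substitution sends $\{\boldsymbol{c}'\in\mathcal{C}:c'_i=b\}$ bijectively onto $\{\boldsymbol{d}\in\mathcal{C}:d_1=b\}$ for $b\in\{0,1\}$. In the exponent, $\mathbb{I}[c'_j=0]=\mathbb{I}[d_{\sigma_i^{-1}(j)}=0]$, and writing $k=\sigma_i^{-1}(j)$ (so that $j\neq i$ is equivalent to $k\neq 1$, and $j=\sigma_i(k)$) turns $\sum_{j\neq i}\mathbb{I}[c'_j=0]L_j$ into $\sum_{k\neq 1}\mathbb{I}[d_k=0]L_{\sigma_i(k)}$. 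Substituting these into~(\ref{APP}) shows that the message to $v_i$ equals the message-to-$v_1$ map evaluated on the permuted input tuple whose slot $k\in\{2,\dots,K\}$ carries the value $L_{\sigma_i(k)}$; thus $\pi_i$ may be taken to be $\sigma_i$. The bookkeeping is consistent because $\sigma_i(1)=i$ forces $\{\sigma_i(k):k\in\{2,\dots,K\}\}=\{1,\dots,K\}\setminus\{i\}$, which is precisely the index set of the incoming messages $\boldsymbol{L}_{\sim i}$, so the permuted tuple is well defined.

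The argument is short and the main point requiring care is not analytical depth but consistent bookkeeping: one must keep straight the distinction between permuting coordinate positions of codewords and permuting the argument slots of the message map (equivalently, tracking $\sigma_i$ versus $\sigma_i^{-1}$ through the change of variables), and one must check that the asymmetric indicator $\mathbb{I}[\,\cdot=0\,]$ in~(\ref{APP}) is undisturbed by the substitution — which holds because over $\mathbb{F}_2$ an element of $\mathrm{Aut}(\mathcal{C})$ is a pure coordinate permutation and never flips a bit value. Once transitivity supplies the automorphisms $\sigma_i$, the remaining verification is entirely routine, and the same $\sigma_i$ serves both for the clean message map to $v_1$ and, via its inverse, for converting any received tuple into the one to feed that map, as in the examples of $\mathcal{C}_1$ and $\mathcal{C}_2$ above.
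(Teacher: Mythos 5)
Your proposal is correct and takes essentially the same route as the paper: transitivity of $\mathrm{Aut}(\mathcal{C})$ on the coordinates supplies, for each $i$, an automorphism $\sigma_i$ carrying coordinate $1$ to $i$, which serves as the permutation $\pi_i$ required by the definition of a message-invariant subcode. The paper's own proof is a one-line assertion of exactly this, while you additionally carry out the change-of-variables check in the APP formula (\ref{APP}) that the paper leaves implicit, which is a welcome but not a different argument.
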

\begin{proof}
    From the definition, for any variable node $v_i$ linked to $\mathcal{C}$, we can identify an automorphism $\pi_i$ of $\mathcal{C}$ such that $\pi_i(v_1) = v_i$. Therefore, $\pi_i$ is the permutation we are looking for with respect to $v_i$.
\end{proof}

\quad\par

There are many established examples of transitive codes, such as Hamming codes, extended Hamming codes, Reed–Muller codes, extended BCH codes, and extended Preparata codes as documented in \cite{mogilnykh2020coordinate}. It can be shown that both $\mathcal{C}_1$ and $\mathcal{C}_2$ are indeed transitive codes.


\section{Approximation Formulas in Gaussian Approximation}
Following the method in \cite{chung2001analysis}, for SPC nodes, $\phi(m_V)$ is approximated by 
\begin{equation}
    \phi(m_V) \approx\left\{
\begin{aligned}
&   e^{-0.4527m_V^{0.86}+0.0218},\;\quad\quad if\; 0< m_V < 10,\\
&   \sqrt{\frac{\pi}{m_V}} e^{-\frac{m_V}{4}}(1-\frac{10}{7m_V}), \; else.
\end{aligned}
\right.
\end{equation}

For GC nodes, $\phi_G(m_V)$ is obtained using the Monte Carlo method. For subcode $\mathcal{C}_1$, 
\begin{equation}
    \phi_G(m_V) \approx\left\{
\begin{aligned}
&   -0.22m_V^3+0.86m_V^2+0.022m_V,\; if\; 0< m_V \leq1,\\
&   0.20m_V^2+0.75m_V-0.28,\;\;\;\;\;\;\;\;\;\;\; if\; 1< m_V \leq2,\\
&   0.042m_V^2+1.4m_V-1,\quad\quad\quad\quad\; if\; 2< m_V \leq5,\\
&   1.9m_V-3, \; \quad\quad\quad\quad\quad\quad \quad\quad\;\; else.
\end{aligned}
\right.
\end{equation}
For subcode $\mathcal{C}_2$,
\begin{equation}
    \phi_G(m_V) \approx\left\{
\begin{aligned}
&   -0.183m_V^4+0.375m_V^3+0.149m_V^2-0.015m_V,\;\\
&\;\;\;\;\;\;\;\;\;\;\;\;\;\;\;\;\;\;\;\;\;\;\;\;\;\;\;\;\;\;\;\;\;\;\;\;\;\;\;\;\;\;\;\;\;\;\;\;\;\;\;if\; 0< m_V \leq1,\\
&   -0.013m_V^4+0.013m_V^3+0.3634m_V^2-0.038,\;\\
&\;\;\;\;\;\;\;\;\;\;\;\;\;\;\;\;\;\;\;\;\;\;\;\;\;\;\;\;\;\;\;\;\;\;\;\;\;\;\;\;\;\;\;\;\;\;\;\;\;\;\;if\; 1< m_V \leq2,\\
&   0.0024m_V^4-0.051m_V^3+0.421m_V^2+0.064m_V-0.11,\;\\
&\;\;\;\;\;\;\;\;\;\;\;\;\;\;\;\;\;\;\;\;\;\;\;\;\;\;\;\;\;\;\;\;\;\;\;\;\;\;\;\;\;\;\;\;\;\;\;\;\;\;\;if\; 2< m_V \leq5,\\
&   0.0025m_V^2+1.71m_V-2.889, \quad\quad\;\; else.
\end{aligned}
\right.
\end{equation}

\end{appendices}

\bibliographystyle{unsrt}
\bibliography{reference}

\end{document}